\providecommand{\U}[1]{\protect\rule{.1in}{.1in}}
\newtheorem{theorem}{Theorem}
\newtheorem{corollary}[theorem]{Corollary}
\newtheorem{lemma}[theorem]{Lemma}
\newtheorem{remark}[theorem]{Remark}
\newenvironment{proof}[1][Proof]{\noindent\textbf{#1.} }{\ \rule{0.5em}{0.5em}}
\let\pdfoutput=\undefined\fi
\begin{document}

\title{\textbf{Analytical} \textbf{Blowup Solutions to the Isothermal Euler-Poisson
Equations of Gaseous Stars in }$R^{N}$}
\author{Y\textsc{uen} M\textsc{anwai\thanks{E-mail address: nevetsyuen@hotmail.com }}\\\textit{Department of Applied Mathematics, }\\\textit{The Hong Kong Polytechnic University,}\\\textit{Hung Hom, Kowloon, Hong Kong}}
\date{Revised 01-June-2009}
\maketitle

\begin{abstract}
This article is the continued version of the analytical blowup solutions for
2-dimensional Euler-Poisson equations in \cite{Y1} and \cite{Y2}. With the
extension of the blowup solutions with radial symmetry for the isothermal
Euler-Poisson equations in $R^{2}$, other special blowup solutions in $R^{N}$
with non-radial symmetry are constructed by the separation method.

Key words: Analytical Solutions, Euler-Poisson Equations, Isothermal, Blowup,
Special Solutions, Non-Radial Symmetry

\end{abstract}

\section{\bigskip Introduction}

The evolution of a self-gravitating fluid (gaseous stars) can be formulated by
the isentropic Euler-Poisson equations of the following form:
\begin{equation}
\left\{
\begin{array}
[c]{rl}%
{\normalsize \rho}_{t}{\normalsize +\nabla\cdot(\rho\vec{u})} &
{\normalsize =}{\normalsize 0,}\\
{\normalsize (\rho\vec{u})}_{t}{\normalsize +\nabla\cdot(\rho\vec{u}%
\otimes\vec{u})+\nabla P} & {\normalsize =}{\normalsize -\rho\nabla\Phi,}\\
{\normalsize \Delta\Phi(t,\vec{x})} & {\normalsize =\alpha(N)}%
{\normalsize \rho,}%
\end{array}
\right.  \label{Euler-Poisson}%
\end{equation}
where $\alpha(N)$ is a constant related to the unit ball in $R^{N}$:
$\alpha(1)=2$; $\alpha(2)=2\pi$ and For $N\geq3,$%
\begin{equation}
\alpha(N)=N(N-2)V(N)=N(N-2)\frac{\pi^{N/2}}{\Gamma(N/2+1)},
\end{equation}
where $V(N)$ is the volume of the unit ball in $R^{N}$ and $\Gamma$ is a Gamma
function. And as usual, $\rho=\rho(t,\vec{x})$ and $\vec{u}=\vec{u}(t,\vec
{x})=(u_{1},u_{2},....,u_{N})\in\mathbf{R}^{N}$ are the density, the velocity
respectively. $P=P(\rho)$\ is the pressure.

In the above system, the self-gravitational potential field $\Phi=\Phi
(t,\vec{x})$\ is determined by the density $\rho$ through the Poisson equation.

The equation (\ref{Euler-Poisson})$_{3}$ is the Poisson equation through which
the gravitational potential is determined by the density distribution of the
density itself. Thus, we call the system (\ref{Euler-Poisson}) the
Euler-Poisson equations. The equations can be viewed as a prefect gas model.
The function $P=P(\rho)$\ is the pressure. The $\gamma$-law can be applied on
the pressure $P(\rho)$, i.e.%
\begin{equation}
{\normalsize P}\left(  \rho\right)  {\normalsize =K\rho}^{\gamma}%
:=\frac{{\normalsize \rho}^{\gamma}}{\gamma}, \label{gamma}%
\end{equation}
which is a commonly the hypothesis. The constant $\gamma=c_{P}/c_{v}\geq1$,
where $c_{P}$, $c_{v}$\ are the specific heats per unit mass under constant
pressure and constant volume respectively, is the ratio of the specific heats,
that is, the adiabatic exponent in (\ref{gamma}). In particular, the fluid is
called isothermal if $\gamma=1$. It can be used for constructing models with
non-degenerate isothermal cores, which have a role in connection with the
so-called Schonberg-Chandrasekhar limit \cite{KW}. And we denote the radial
diameter as: $r:=\sqrt{\sum_{j=1}^{N}x_{j}^{2}}$.

The system can be rewritten as%
\begin{equation}
\left\{
\begin{array}
[c]{rl}%
\rho_{t}+\nabla\cdot\vec{u}\rho+\nabla\rho\cdot\vec{u} & {\normalsize =}%
{\normalsize 0,}\\
\rho\left(  \frac{\partial u_{i}}{\partial t}+\sum_{k=1}^{N}u_{k}%
\frac{\partial u_{i}}{\partial x_{k}}\right)  +\frac{\partial}{\partial x_{i}%
}P(\rho) & {\normalsize =}{\normalsize -\rho}\frac{\partial}{\partial x_{i}%
}{\normalsize \Phi(\rho),}\text{ for }i=1,2,...N,\\
{\normalsize \Delta\Phi(t,x)} & {\normalsize =\alpha(N)}{\normalsize \rho.}%
\end{array}
\right.  \label{eqeq1}%
\end{equation}
For $N=3$, (\ref{eqeq1}) is a classical (non-relativistic) description of a
galaxy, in astrophysics. See \cite{BT}, \cite{C} and \cite{KW} for a detail
about the system.

For the local existence results about the system were shown in \cite{B} and
\cite{G}.\ Historically in astrophysics, Goldreich and Weber constructed the
analytical blowup (collapsing) solutions of the $3$-dimensional Euler-Poisson
equations for $\gamma=4/3$ for the non-rotating gas spheres \cite{GW}. After
that, Makino \cite{M1} obtained the rigorously mathematical proof of the
existence of such kind of blowup solutions. Besides, Deng, Xiang and Yang
extended the above blowup solutions in $R^{N}$ ($N\geq4$) \cite{DXY}.
Recently, Yuen obtained the blowup solutions in $R^{2}$ with $\gamma=1$ by a
new transformation \cite{Y1}. The family of the analytical solutions with
radial symmetry
\[
\vec{u}=\frac{\vec{x}}{\left\vert r\right\vert }V(t,r),
\]
are rewritten as

For $N\geq3$ and $\gamma=(2N-2)/N$, in \cite{Y}
\begin{equation}
\left\{
\begin{array}
[c]{c}%
\rho(t,r)=\left\{
\begin{array}
[c]{c}%
\dfrac{1}{a^{N}(t)}y(\frac{r}{a(t)})^{N/(N-2)},\text{ for }r<a(t)Z_{\mu};\\
0,\text{ for }a(t)Z_{\mu}\leq r.
\end{array}
\right.  \text{, }V{\normalsize (t,r)=}\dfrac{\dot{a}(t)}{a(t)}%
{\normalsize r,}\\
\ddot{a}(t){\normalsize =}\dfrac{-\lambda}{a^{N-1}(t)},\text{ }%
{\normalsize a(0)=a}_{1}\neq0{\normalsize ,}\text{ }\dot{a}(0){\normalsize =a}%
_{2},\\
\ddot{y}(z){\normalsize +}\dfrac{N-1}{z}\dot{y}(z){\normalsize +}\dfrac
{\alpha(N)}{(2N-2)K}{\normalsize y(z)}^{N/(N-2)}{\normalsize =\mu,}\text{
}y(0)=\alpha>0,\text{ }\dot{y}(0)=0,
\end{array}
\right.  \label{solution2}%
\end{equation}
where $\mu=[N(N-2)\lambda]/(2N-2)K$ and the finite $Z_{\mu}$ is the first zero
of $y(z)$;

For $N=2$ and $\gamma=1$, in \cite{Y1}%
\begin{equation}
\left\{
\begin{array}
[c]{c}%
\rho(t,r)=\dfrac{1}{a^{2}(t)}e^{y\left(  r/a(t)\right)  }\text{,
}V{\normalsize (t,r)=}\dfrac{\dot{a}(t)}{a(t)}{\normalsize r;}\\
\ddot{a}(t){\normalsize =}\dfrac{-\lambda}{a(t)},\text{ }{\normalsize a(0)=a}%
_{1}>0{\normalsize ,}\text{ }\dot{a}(0){\normalsize =a}_{2};\\
\ddot{y}(z){\normalsize +}\dfrac{1}{z}\dot{y}(z){\normalsize +\dfrac
{\alpha(2)}{K}e}^{y(z)}{\normalsize =\mu,}\text{ }y(0)=\alpha,\text{ }\dot
{y}(0)=0,
\end{array}
\right.  \label{solution 3}%
\end{equation}
where $K>0$, $\mu=2\lambda/K$ with a sufficiently small $\lambda$ and $\alpha$
are constants.\newline However, all the known solutions are in radial
symmetry. In this paper, we are able to obtain the similar results to the
non-radial symmetric cases for the $2$-dimensional Euler-Poisson equations
(\ref{eqeq1}) in the following theorem.

\begin{theorem}
\label{thm2 copy(2)}For the isothermal Euler-Poisson equations (\ref{eqeq1})
in $R^{N}$, there exists a family of solutions,%
\begin{equation}
\left\{
\begin{array}
[c]{c}%
\rho(t,\vec{x})=\frac{1}{a(t)}e^{-\frac{\Phi\left(  \frac{x_{1}^{2}+x_{2}^{2}%
}{a(t)}\right)  }{K}+C}\text{, }{\normalsize \vec{u}(t,\vec{x})=\dfrac
{\overset{\cdot}{a}(t)}{2a(t)}(}x_{1},x_{2},x_{1}+x_{2},...x_{1}%
+x_{2}){\normalsize ,}\\
a(t)=\frac{1}{4a_{1}}\left(  a_{2}t+2a_{1}\right)  ^{2}\\
s\ddot{\Phi}(s)+\frac{\dot{\Phi}(s)}{2}-\epsilon^{\ast}e^{-\tfrac{\Phi(s)}{K}%
}=0,\text{ }\Phi(0)=\alpha\text{, }\dot{\Phi}(0)=\epsilon^{\ast}%
e^{-\tfrac{\alpha}{K}},
\end{array}
\right.  \label{ss2}%
\end{equation}
where $A,$ $B,$ ${\normalsize a}_{1}\neq0,$ ${\normalsize a}_{2\text{ }},$
$\frac{\alpha(N)e^{C}}{4}=\epsilon^{\ast}>0$, $\alpha$ and $\beta$ are
constants.\newline In particular, $a_{1}>0$ and $a_{2}<0$, the solutions
(\ref{ss2}) blow up in the finite time $T=-a_{2}/a_{1}$.
\end{theorem}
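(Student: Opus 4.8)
The statement is a verification claim, so the plan is to substitute the ansatz (\ref{ss2}) into the three equations of (\ref{eqeq1}) and reduce each one to an identity or to the stated ODEs. I will write $s$ for the self-similar variable entering the density, so that $\rho=a(t)^{-1}e^{-\Phi(s)/K+C}$, and note at the outset that the velocity splits into a ``planar'' part $(u_1,u_2)=\tfrac{\dot a}{2a}(x_1,x_2)$ and a ``transverse'' part $u_3=\dots=u_N=\tfrac{\dot a}{2a}(x_1+x_2)$, each transverse component being independent of its own coordinate; hence $\nabla\cdot\vec u=\partial_{x_1}u_1+\partial_{x_2}u_2=\dot a/a$. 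For the continuity equation I would compute $\rho_t$, $\nabla\rho$ and $\vec u\cdot\nabla\rho$ by the chain rule in $s$: the time derivative and the advective derivative of $s$ combine into a single multiple of $\dot a/a$, which is cancelled exactly against the $a^{-1}$ prefactor of $\rho$ together with the term $\rho\,\nabla\cdot\vec u=(\dot a/a)\rho$. Thus $\rho_t+\nabla\cdot(\rho\vec u)\equiv 0$ for \emph{every} $a(t)$ and \emph{every} profile $\Phi$, so the mass equation imposes no constraint.

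Next I turn to the momentum equations. Because each $u_i$ equals $\tfrac{\dot a}{2a}$ times a fixed linear function $\ell_i(\vec x)$, the material acceleration $\partial_t u_i+(\vec u\cdot\nabla)u_i$ comes out to $\tfrac{2\ddot a\,a-\dot a^2}{4a^2}\,\ell_i(\vec x)$. Setting $b=\sqrt a$ gives $2\ddot a\,a-\dot a^2=4b^3\ddot b$, so this inertial term vanishes exactly when $\sqrt a$ is affine in $t$ --- which is precisely the prescribed $a(t)=\tfrac{1}{4a_1}(a_2t+2a_1)^2$ (equivalently, $a$ solves $\ddot a=\dot a^2/(2a)$ with $a(0)=a_1$, $\dot a(0)=a_2$). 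With the inertial term gone, the $i$-th momentum equation collapses to the hydrostatic balance $\partial_{x_i}P(\rho)=-\rho\,\partial_{x_i}\Phi$; since $P=K\rho$ this is $\partial_{x_i}(K\ln\rho+\Phi)=0$, so $K\ln\rho+\Phi$ is a function of $t$ alone, i.e.\ the gravitational potential must coincide with the profile $\Phi(s)$ up to an additive function of $t$.

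It then remains to insert $\Phi_{\mathrm{grav}}=\Phi(s)+h(t)$ into the Poisson equation $\Delta\Phi_{\mathrm{grav}}=\alpha(N)\rho$. Since $\Phi(s)$ depends only on the Cartesian coordinates entering $s$, its Laplacian equals the planar Laplacian of a radial function of $s$, which in the variable $s$ takes the shape $\tfrac{4}{a}\bigl(s\ddot\Phi+c\,\dot\Phi\bigr)$ for a constant $c$ fixed by the number of those coordinates; the $a^{-1}$ on each side cancels and what is left is exactly the stated second-order ODE for $\Phi$, with $\epsilon^\ast=\alpha(N)e^{C}/4$, whose evaluation at $s=0$ forces the stated value of $\dot\Phi(0)$. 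That accounts for all of (\ref{eqeq1}). For the blowup assertion, when $a_1>0$ and $a_2<0$ the factor $a_2t+2a_1$ is positive on an initial interval and vanishes at a finite time $T$, at which $a(t)\to 0^+$; then at the spatial origin $\rho(t,\vec 0)=a(t)^{-1}e^{-\alpha/K+C}\to\infty$ as $t\to T^-$, so the solution blows up in finite time.

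The one genuinely delicate step is the Laplacian computation in the self-similar variable --- pinning down the coefficient of $\dot\Phi$ in the reduced equation and keeping straight exactly which Cartesian coordinates the profile $\Phi(s)$ depends on. Everything else is routine chain-rule bookkeeping; the single conceptual point is the observation that ``$\sqrt a$ affine in $t$'' is precisely what annihilates the inertial term in the momentum equations, which is why the special quadratic $a(t)$ shows up.
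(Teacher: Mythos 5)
Your route is the paper's own: direct substitution, the continuity equation holding identically for any profile $f$ and any $a(t)$, the vanishing of the inertial term under $2a(t)\ddot a(t)-\dot a^{2}(t)=0$ (your observation that this says $\sqrt{a}$ is affine in $t$ is a cleaner way of arriving at the quadratic $a(t)$ the paper gets ``by Maple''), the exact cancellation of $K\partial_{x_i}\rho$ against $\rho\,\partial_{x_i}\Phi$, and the reduction of the Poisson equation to an ODE in $s=(x_1^2+x_2^2)/a(t)$; your hydrostatic-balance reading of the momentum equations is an equivalent reorganization, not a different argument. The genuine gap is exactly the step you defer: you leave the coefficient $c$ in $\Delta\Phi=\tfrac{4}{a}\bigl(s\ddot\Phi+c\,\dot\Phi\bigr)$ undetermined and then assert that the reduction gives ``exactly the stated ODE''. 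That coefficient is the entire content of the theorem's third line, so it must be computed, and computing it does \emph{not} reproduce the printed equation: $\partial_{x_1}\Phi=\dot\Phi\,\tfrac{2x_1}{a}$ gives $\partial_{x_1}^{2}\Phi=\ddot\Phi\,\tfrac{4x_1^{2}}{a^{2}}+\dot\Phi\,\tfrac{2}{a}$, and adding the $x_2$ term yields $\Delta\Phi=\tfrac{4}{a}\bigl(s\ddot\Phi+\dot\Phi\bigr)$, i.e.\ $c=1$. Hence the reduced equation is $s\ddot\Phi(s)+\dot\Phi(s)-\epsilon^{\ast}e^{-\Phi(s)/K}=0$ with $\epsilon^{\ast}=\alpha(N)e^{C}/4$, whereas (\ref{ss2}) carries $\dot\Phi/2$; the discrepancy comes from a slip in the paper's own Poisson-equation lemma, where $\tfrac{2}{a}\dot\Phi+\tfrac{2}{a}\dot\Phi$ is factored as $\tfrac{4}{a}\cdot\tfrac{\dot\Phi}{2}$. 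Note that the initial slope stated in the theorem, $\dot\Phi(0)=\epsilon^{\ast}e^{-\alpha/K}$, is precisely what setting $s=0$ forces when the coefficient is $1$ (with $\tfrac12$ it would be $2\epsilon^{\ast}e^{-\alpha/K}$), which corroborates $c=1$. So to complete your proof you must carry out this computation and, having done so, state the theorem with the corrected ODE.

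A second, smaller point: you identify the blowup time only as ``the finite time at which $a_2t+2a_1$ vanishes''. Make it explicit: $a(t)=\tfrac{1}{4a_1}(a_2t+2a_1)^{2}$ vanishes at $T=-2a_1/a_2$, which is positive and finite for $a_1>0>a_2$, and at which $\rho(t,\vec 0)=a(t)^{-1}e^{-\alpha/K+C}\rightarrow\infty$; this does not agree with the value $T=-a_2/a_1$ printed in the theorem, which is another slip to be corrected rather than reproduced. Apart from these two items, your chain-rule computations (in particular $\nabla\cdot\vec u=\dot a/a$ and the material acceleration $\tfrac{2a\ddot a-\dot a^{2}}{4a^{2}}\,\ell_i(\vec x)$) match the paper's and are correct.
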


\section{Special Blowup Solutions I}

Before presenting the proof of Theorem \ref{thm2 copy(2)}, we prepare the
following two lemmas.

\begin{lemma}
\label{lem:generalsolutionformasseq copy(1)}For the continuity equation
(\ref{eqeq1})$_{1}$ in $R^{N}$, there exist solutions,%
\begin{equation}
\rho(t,\vec{x})=\frac{f\left(  \dfrac{x_{1}^{2}+x_{2}^{2}}{a(t)}\right)
}{a(t)},\text{ }{\normalsize \vec{u}(t,\vec{x})=\frac{\overset{\cdot}{a}%
(t)}{2a(t)}(}x_{1},x_{2},x_{1}+x_{2},...,x_{1}+x_{2}),
\end{equation}
where the scalar function $f(s)\geq0\in C^{1}$ and $a(t)\neq0\in C^{1}.$
\end{lemma}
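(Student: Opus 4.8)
The plan is to substitute the proposed ansatz for $\rho$ and $\vec{u}$ directly into the continuity equation $\rho_t + \nabla\cdot(\rho\vec{u}) = \rho_t + \rho\,\nabla\cdot\vec{u} + \vec{u}\cdot\nabla\rho = 0$ and verify it holds identically as a consequence of no constraint at all on $f$ (only that it is $C^1$ and non-negative). First I would compute $\nabla\cdot\vec{u}$ for the given velocity field. Since $u_1 = \frac{\dot a}{2a}x_1$, $u_2 = \frac{\dot a}{2a}x_2$, and $u_i = \frac{\dot a}{2a}(x_1+x_2)$ for $i\ge 3$, only $u_1$ depends on $x_1$ and only $u_2$ depends on $x_2$, while $u_i$ for $i\ge 3$ does not depend on $x_i$; hence $\nabla\cdot\vec{u} = \partial_{x_1}u_1 + \partial_{x_2}u_2 = \frac{\dot a}{2a} + \frac{\dot a}{2a} = \frac{\dot a}{a}$.

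Next I would introduce the shorthand $s := (x_1^2+x_2^2)/a(t)$ so that $\rho = f(s)/a(t)$, and compute the two remaining pieces. For the time derivative, $\partial_t s = -\dfrac{(x_1^2+x_2^2)\dot a}{a^2} = -\dfrac{s\,\dot a}{a}$, so
\[
\rho_t = \frac{f'(s)}{a}\,\partial_t s - \frac{f(s)\dot a}{a^2} = -\frac{\dot a}{a}\left(\frac{s f'(s)}{a} + \frac{f(s)}{a}\right).
\]
For the convective term, $\partial_{x_1}s = 2x_1/a$ and $\partial_{x_2}s = 2x_2/a$ while $\partial_{x_i}s = 0$ for $i\ge 3$, so $\nabla\rho = \dfrac{f'(s)}{a}\left(\tfrac{2x_1}{a}, \tfrac{2x_2}{a}, 0, \dots, 0\right)$, and therefore
\[
\vec{u}\cdot\nabla\rho = \frac{\dot a}{2a}\cdot\frac{f'(s)}{a}\left(\frac{2x_1^2}{a} + \frac{2x_2^2}{a}\right) = \frac{\dot a}{a}\cdot\frac{f'(s)}{a}\cdot(x_1^2+x_2^2)\cdot\frac{1}{a}\cdot a = \frac{\dot a}{a}\cdot\frac{s f'(s)}{a}.
\]
Notice that the components $u_i$ with $i\ge 3$ contribute nothing here precisely because $\rho$ does not depend on $x_3,\dots,x_N$; this is the structural reason the seemingly strange velocity field works.

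Finally I would add the three contributions: $\rho_t + \rho\,\nabla\cdot\vec{u} + \vec{u}\cdot\nabla\rho = -\frac{\dot a}{a}\left(\frac{s f'}{a} + \frac{f}{a}\right) + \frac{f}{a}\cdot\frac{\dot a}{a} + \frac{\dot a}{a}\cdot\frac{s f'}{a} = 0$, so every term cancels identically and the continuity equation is satisfied for arbitrary $C^1$ functions $f\ge 0$ and $a\ne 0\in C^1$. I do not anticipate a genuine obstacle here — the computation is routine — but the one point requiring care is bookkeeping the partial derivatives of $s$ with respect to each coordinate and confirming that the extra components of $\vec{u}$ (those involving $x_1+x_2$) drop out of $\vec{u}\cdot\nabla\rho$ because $\rho$ is independent of $x_3,\dots,x_N$; getting that cancellation transparent is the crux of presenting the lemma cleanly.
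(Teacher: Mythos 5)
Your proof is correct and follows essentially the same route as the paper: direct substitution of the ansatz into $\rho_t+\rho\,\nabla\cdot\vec{u}+\vec{u}\cdot\nabla\rho=0$, using that $\nabla\cdot\vec{u}=\dot{a}/a$ because the components $u_i$ $(i\geq3)$ are independent of $x_i$, and that those components drop out of $\vec{u}\cdot\nabla\rho$ since $\rho$ depends only on $x_1,x_2,t$. The only blemish is a stray factor of $a$ in your intermediate rewriting of $\vec{u}\cdot\nabla\rho$ (the chain $\tfrac{\dot a}{a}\cdot\tfrac{f'(s)}{a}\cdot(x_1^2+x_2^2)\cdot\tfrac{1}{a}\cdot a$), a typo that does not affect the correct final expression $\tfrac{\dot a}{a}\cdot\tfrac{s f'(s)}{a}$ or the cancellation.
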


\begin{proof}
We plug the solutions (\ref{ss2}) into the continuity equation (\ref{eqeq1}%
)$_{1}$,
\begin{align}
&  \rho_{t}+\nabla\cdot\vec{u}\rho+\nabla\rho\cdot\vec{u}\\
&  =\frac{\partial}{\partial t}\left[  \frac{f\left(  \dfrac{x_{1}^{2}%
+x_{2}^{2}}{a(t)}\right)  }{a(t)}\right]  +\nabla\cdot\frac{\dot{a}(t)}%
{2a(t)}(x_{1},x_{2},x_{1}+x_{2},...,x_{1}+x_{2})\frac{f\left(  \dfrac
{x_{1}^{2}+x_{2}^{2}}{a(t)}\right)  }{a(t)}\\
&  +\nabla\frac{f\left(  \dfrac{x_{1}^{2}+x_{2}^{2}}{a(t)}\right)  }%
{a(t)}\cdot\frac{\dot{a}(t)}{a(t)}(x_{1},x_{1},x_{3}+x_{2},....x_{2}+x_{2})\\
&  =\frac{-\dot{a}(t)}{a^{2}(t)}f\left(  \frac{x_{1}^{2}+x_{2}^{2}}%
{a(t)}\right)  +\frac{1}{a(t)}\frac{\partial}{\partial t}f\left(  \frac
{x_{1}^{2}+x_{2}^{2}}{a(t)}\right)  +\frac{\dot{a}(t)}{2a(t)}\left(
\frac{\partial}{\partial x_{1}}x_{1}+\frac{\partial}{\partial x_{2}}%
x_{2}+\underset{i=3}{\overset{N}{\sum}}\frac{\partial}{\partial x_{i}}%
(x_{1}+x_{2})\right)  \frac{f\left(  \dfrac{x_{2}^{2}+x_{2}^{2}}{a(t)}\right)
}{a(t)}\\
&  +\frac{\dot{a}(t)}{2a(t)}\left[  \frac{\partial}{\partial x_{1}}%
\frac{f\left(  \dfrac{x_{1}^{2}+x_{2}^{2}}{a(t)}\right)  }{a(t)}\cdot
x_{1}+\frac{\partial}{\partial x_{2}}\frac{f\left(  \dfrac{x_{1}^{2}+x_{2}%
^{2}}{a(t)}\right)  }{a(t)}\cdot x_{2}+\underset{i=3}{\overset{N}{\sum}}%
\frac{\partial}{\partial x_{i}}\frac{f\left(  \dfrac{x_{1}^{2}+x_{2}^{2}%
}{a(t)}\right)  }{a(t)}\cdot(x_{1}+x_{2})\right]  \\
&  =\frac{-\dot{a}(t)}{a^{2}(t)}f\left(  \frac{x_{1}^{2}+x_{2}^{2}}%
{a(t)}\right)  -\frac{1}{a(t)}\dot{f}\left(  \frac{x_{1}^{2}+x_{2}^{2}}%
{a(t)}\right)  \frac{(x_{1}^{2}+x_{2}^{2})\dot{a}(t)}{a^{2}(t)}+\frac{\dot
{a}(t)}{a(t)}\frac{f\left(  \dfrac{x_{1}^{2}+x_{2}^{2}}{a(t)}\right)  }%
{a^{2}(t)}\\
&  +\frac{\dot{a}(t)}{2a(t)}\left[  \frac{\dot{f}\left(  \dfrac{x_{1}%
^{2}+x_{2}^{2}}{a(t)}\right)  }{a^{2}(t)}\frac{2x_{1}^{2}}{a(t)}+\frac{\dot
{f}\left(  \dfrac{x_{1}^{2}+x_{2}^{2}}{a(t)}\right)  }{a^{2}(t)}\frac
{2x_{2}^{2}}{a(t)}\right]  \\
&  =0.
\end{align}
The proof is completed.
\end{proof}

The following lemma handles the Poisson equation (\ref{eqeq1})$_{3}$ for our
solutions (\ref{ss2}):

\begin{lemma}
\label{lemma2 copy(1)}The solutions,
\begin{equation}
\rho=\frac{{\normalsize 1}}{a(t)}e^{-\frac{\Phi\left(  \frac{x_{1}^{2}%
+x_{2}^{2}}{a(t)}\right)  }{K}+C},\label{aass1}%
\end{equation}
with the second-order ordinary differential equation:%
\begin{equation}
s\ddot{\Phi}(s)+\frac{\dot{\Phi}(s)}{2}-\epsilon^{\ast}e^{-\tfrac{\Phi(s)}{K}%
}=0,\text{ }\Phi(0)=\alpha\text{, }\dot{\Phi}(0)=\epsilon^{\ast}%
e^{-\tfrac{\alpha}{K}},\label{asasasa1}%
\end{equation}
where $s:=(x_{1}^{2}+x_{2}^{2})/a(t)$ and $C$, $\frac{\alpha(N)e^{C}}%
{4}=\epsilon^{\ast}$ and $\alpha$ are constants,\newline fit into the Poisson
equation (\ref{eqeq1})$_{3}$ in $R^{N}$.
\end{lemma}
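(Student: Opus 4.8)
The plan is to substitute the ansatz (\ref{aass1}) directly into the Poisson equation (\ref{eqeq1})$_{3}$, reading the scalar $\Phi$ there as the gravitational potential expressed in the similarity variable $s=(x_{1}^{2}+x_{2}^{2})/a(t)$, and to show that the resulting scalar identity is exactly the ordinary differential equation (\ref{asasasa1}). (That $\Phi$ must take this form is dictated by the momentum equation: for the velocity field in (\ref{ss2}) the material acceleration $\partial_{t}u_{i}+\sum_{k}u_{k}\partial_{x_{k}}u_{i}$ vanishes identically, so the isothermal balance collapses to $\nabla\Phi=-K\nabla\ln\rho$; this is used in the proof of Theorem~\ref{thm2 copy(2)}, but for the present lemma only the Poisson equation is at stake.) The structural point is that $s$, and hence any function $g(s)$, depends on $\vec{x}$ through $x_{1},x_{2}$ only, so the $R^{N}$ Laplacian degenerates to the planar one, $\Delta_{R^{N}}g(s)=(\partial_{x_{1}}^{2}+\partial_{x_{2}}^{2})g(s)$, since $\partial_{x_{i}}$ annihilates $g(s)$ for $i\geq 3$.

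First I would record $\partial_{x_{1}}s=2x_{1}/a(t)$, $\partial_{x_{2}}s=2x_{2}/a(t)$ and $\partial_{x_{1}}^{2}s=\partial_{x_{2}}^{2}s=2/a(t)$, and apply the chain rule twice to obtain
\begin{equation}
\Delta\Phi(s)=\ddot{\Phi}(s)\,\frac{4(x_{1}^{2}+x_{2}^{2})}{a^{2}(t)}+\dot{\Phi}(s)\,\frac{4}{a(t)}=\frac{4}{a(t)}\left( s\ddot{\Phi}(s)+\dot{\Phi}(s)\right) .
\end{equation}
On the other hand, (\ref{aass1}) gives $\alpha(N)\rho=\dfrac{\alpha(N)e^{C}}{a(t)}\,e^{-\Phi(s)/K}$. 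Equating the two expressions, the common factor $1/a(t)$ cancels and, with $\epsilon^{\ast}=\alpha(N)e^{C}/4$, one is left with $s\ddot{\Phi}(s)+\dot{\Phi}(s)-\epsilon^{\ast}e^{-\Phi(s)/K}=0$, which is the second-order equation (\ref{asasasa1}).

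It then remains to account for the prescribed data. Rewriting the equation as $\tfrac{d}{ds}\!\left( s\dot{\Phi}(s)\right) =\epsilon^{\ast}e^{-\Phi(s)/K}$ shows that $s=0$ is only an apparent singularity: evaluating at $s=0$ forces the compatibility relation $\dot{\Phi}(0)=\epsilon^{\ast}e^{-\Phi(0)/K}$, so choosing $\Phi(0)=\alpha$ yields $\dot{\Phi}(0)=\epsilon^{\ast}e^{-\alpha/K}$ exactly as displayed, and integrating twice turns (\ref{asasasa1}) into a Volterra integral equation for $\Phi$ admitting, by a standard contraction argument, a unique solution of class $C^{2}$ near $s=0$; hence $\rho$ in (\ref{aass1}) is well defined, in particular near the axis $x_{1}=x_{2}=0$.

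I do not foresee a genuine obstacle. The whole content is the remark that the $R^{N}$ Laplacian of a function of $(x_{1}^{2}+x_{2}^{2})/a(t)$ is planar, followed by a routine change of variables reducing the Poisson PDE to (\ref{asasasa1}). The only places demanding care are the bookkeeping of the chain-rule constants — fixing the coefficient of $\dot{\Phi}$ and the value of $\epsilon^{\ast}$, which in particular must match the prescribed $\dot{\Phi}(0)$ — and the verification that the point $s=0$ is harmless, both handled above.
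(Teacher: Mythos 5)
Your route is the same as the paper's: substitute the ansatz into (\ref{eqeq1})$_{3}$, use that $\Phi$ depends on $\vec{x}$ only through $x_{1},x_{2}$ so the $R^{N}$ Laplacian is planar, and reduce by the chain rule to an ODE in $s=(x_{1}^{2}+x_{2}^{2})/a(t)$. Your Laplacian computation, $\Delta\Phi(s)=\tfrac{4}{a(t)}\left(s\ddot{\Phi}(s)+\dot{\Phi}(s)\right)$, is correct and agrees term-by-term with the paper's penultimate line (which contains $\dot{\Phi}\,\tfrac{2}{a(t)}$ twice).

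There is, however, a point you pass over silently, and it is exactly where the trouble lies: the equation your computation produces is $s\ddot{\Phi}(s)+\dot{\Phi}(s)-\epsilon^{\ast}e^{-\Phi(s)/K}=0$, whereas (\ref{asasasa1}) --- the equation the lemma asserts, and the one reused in (\ref{ss2}) and in Lemma \ref{lemma2} via (\ref{SecondorderElliptic}) --- carries $\tfrac{1}{2}\dot{\Phi}(s)$. These are different ODEs, so your sentence ``which is the second-order equation (\ref{asasasa1})'' is not justified as written. The mismatch originates in the paper itself: its final factorization turns $\tfrac{4s}{a}\ddot{\Phi}+\tfrac{4}{a}\dot{\Phi}-\tfrac{\alpha(N)e^{C}}{a}e^{-\Phi/K}$ into $\tfrac{4}{a}\bigl(s\ddot{\Phi}+\tfrac{1}{2}\dot{\Phi}-\tfrac{\alpha(N)e^{C}}{4}e^{-\Phi/K}\bigr)$, which mis-halves the $\dot{\Phi}$ term. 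Your own compatibility check at $s=0$ is evidence for the coefficient-$1$ version: writing the derived equation as $\tfrac{d}{ds}\bigl(s\dot{\Phi}(s)\bigr)=\epsilon^{\ast}e^{-\Phi(s)/K}$ forces $\dot{\Phi}(0)=\epsilon^{\ast}e^{-\alpha/K}$, exactly the stated datum, while the $\tfrac{1}{2}$ version would force $\dot{\Phi}(0)=2\epsilon^{\ast}e^{-\alpha/K}$. So you should state explicitly that, for the density (\ref{aass1}), the Poisson equation reduces to $s\ddot{\Phi}+\dot{\Phi}=\epsilon^{\ast}e^{-\Phi/K}$ and that (\ref{asasasa1}) must be read with coefficient $1$ on $\dot{\Phi}$ (with the corresponding change propagated to Lemma \ref{lemma2} and Theorem \ref{thm2 copy(2)}); as it stands, identifying your equation with the displayed (\ref{asasasa1}) is a gap. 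Your closing paragraph on local existence and uniqueness near $s=0$ is not needed for this lemma --- that is the content of the paper's Lemma \ref{lemma2} --- but it does no harm.
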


\begin{proof}
We check that our potential function $\Phi(t,x_{1,}x_{2})$ satisfies the
Poisson equation (\ref{eqeq1})$_{3}$:%
\begin{align}
&  {\normalsize \Delta\Phi(t,\vec{x})-\alpha(N)}{\normalsize \rho}\\
&  =\nabla\cdot\nabla\Phi\left(  \frac{x_{1}^{2}+x_{2}^{2}}{a(t)}\right)
-\frac{{\normalsize \alpha(N)}}{a(t)}e^{-\frac{^{\Phi\left(  \frac{x_{1}%
^{2}+x_{2}^{2}}{a(t)}\right)  }}{K}+C}\\
&  =\nabla\cdot\left[  \frac{\partial}{\partial x_{1}}\Phi\left(  \frac
{x_{1}^{2}+x_{2}^{2}}{a(t)}\right)  ,\frac{\partial}{\partial x_{2}}%
\Phi\left(  \frac{x_{1}^{2}+x_{2}^{2}}{a(t)}\right)  ,\frac{\partial}{\partial
x_{3}}\Phi\left(  \frac{x_{1}^{2}+x_{2}^{2}}{a(t)}\right)  ,...,\frac
{\partial}{\partial x_{N}}\Phi\left(  \frac{x_{1}^{2}+x_{2}^{2}}{a(t)}\right)
\right]  \\
&  -\frac{{\normalsize \alpha(N)}}{a(t)}e^{-\frac{\Phi\left(  \frac{x_{1}%
^{2}+x_{2}^{2}}{a(t)}\right)  }{K}+C}\\
&  =\nabla\cdot\left[  \dot{\Phi}\left(  \frac{x_{1}^{2}+x_{2}^{2}}%
{a(t)}\right)  \frac{2x_{1}}{a(t)},\dot{\Phi}\left(  \frac{x_{1}^{2}+x_{2}%
^{2}}{a(t)}\right)  \frac{2x_{2}}{a(t)},0\right]  -\frac{{\normalsize \alpha
(N)}}{a(t)}e^{-\frac{\Phi\left(  \frac{x_{1}^{2}+x_{2}^{2}}{a(t)}\right)  }%
{K}+C}\\
&  =\frac{\partial}{\partial x_{1}}\left[  \dot{\Phi}(\frac{x_{1}^{2}%
+x_{2}^{2}}{a(t)})\frac{2x_{1}}{a(t)}\right]  +\frac{\partial}{\partial x_{2}%
}\left[  \dot{\Phi}(\frac{x_{1}^{2}+x_{2}^{2}}{a(t)})\frac{2x_{2}}%
{a(t)}\right]  -\frac{{\normalsize \alpha(N)}}{a^{2}(t)}e^{-\frac{\Phi\left(
\frac{x_{1}^{2}+x_{2}^{2}}{a(t)}\right)  }{K}+C}\\
&  =\ddot{\Phi}(\frac{x_{1}^{2}+x_{2}^{2}}{a(t)})\frac{4x_{1}^{2}}{a^{2}%
(t)}+\dot{\Phi}(\frac{x_{1}^{2}+x_{2}^{2}}{a(t)})\frac{2}{a(t)}\\
&  +\ddot{\Phi}(\frac{x_{1}^{2}+x_{2}^{2}}{a(t)})\frac{4x_{2}^{2}}{a^{2}%
(t)}+\dot{\Phi}(\frac{x_{1}^{2}+x_{2}^{2}}{a(t)})\frac{2}{a(t)}-\frac
{{\normalsize \alpha(N)}}{a(t)}e^{-\frac{\Phi\left(  \frac{x_{1}^{2}+x_{2}%
^{2}}{a(t)}\right)  }{K}+C}\\
&  =\frac{4}{a(t)}\left(  s\ddot{\Phi}(s)+\frac{\dot{\Phi}(s)}{2}-\frac
{\alpha(N)e^{C}}{4}e^{-\tfrac{\Phi(s)}{K}}\right)  ,
\end{align}
where we choose $s:=(x_{1}^{2}+x_{2}^{2})/a(t)$ and the ordinary differential
equation:%
\begin{equation}
s\ddot{\Phi}(s)+\frac{\dot{\Phi}(s)}{2}-\epsilon^{\ast}e^{-\tfrac{\Phi(s)}{K}%
}=0,\text{ }\Phi(0)=\alpha\text{, }\dot{\Phi}(0)=\beta,
\end{equation}
with $\frac{\alpha(N)e^{C}}{4}=\epsilon^{\ast}$, $\alpha$ and $\beta$ are
constants$.$ Therefore, our solutions (\ref{aass1}) satisfy the Poisson
equation (\ref{eqeq1})$_{3}$.

The proof is completed.
\end{proof}

Besides, we need the lemma for stating the property of the function $\Phi(s)$
of the analytical solutions (\ref{asasasa1}). We need the lemma for stating
the property of the function $\Phi(s)$. In particular, the solutions
(\ref{ss2}) in $N$-dimensional case involve the following lemma. The similar
lemma was already given in Lemmas 9 and 10, in \cite{Y1}, by the fixed point
theorem. For the completeness of understanding the whole article, the proof is
also presented here.

\begin{lemma}
\label{lemma2}There exists a sufficiently small $\varepsilon^{\ast}>0$, such
that the ordinary differential equation%
\begin{equation}
\left\{
\begin{array}
[c]{c}%
s\ddot{\Phi}(s){\normalsize +}\frac{\dot{\Phi}(s)}{2}-{\normalsize \varepsilon
}^{\ast}{\normalsize e}^{-\dfrac{\Phi(s)}{K}}{\normalsize =0,}\\
\Phi(0)=\alpha\text{, }\dot{\Phi}(0)=-\varepsilon^{\ast}{\normalsize e}%
^{-\dfrac{\alpha}{K}}%
\end{array}
\right.  \label{SecondorderElliptic}%
\end{equation}
where $K>0$ and $\alpha$ are constants, has a unique solution $\Phi(s)\in
C^{2}[0,\infty)$.
\end{lemma}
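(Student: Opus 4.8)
The equation in (\ref{SecondorderElliptic}) is singular only at $s=0$: for $s>0$ it is the regular second--order ODE $\ddot\Phi=s^{-1}\bigl(\varepsilon^{\ast}e^{-\Phi/K}-\tfrac12\dot\Phi\bigr)$ with a $C^{\infty}$ right--hand side in $(\Phi,\dot\Phi)$, so the only genuine issue is the behaviour at the origin. The plan is to remove that singularity by the substitution $s=t^{2}$, $t\ge0$. Setting $F(t):=\Phi(t^{2})$, the chain rule gives $\dot\Phi(s)=F'(t)/(2t)$ and $4s\ddot\Phi(s)=F''(t)-F'(t)/t$, so the two awkward terms recombine as $s\ddot\Phi(s)+\tfrac12\dot\Phi(s)=\tfrac14F''(t)$, and (\ref{SecondorderElliptic}) turns into the regular (nonsingular) equation
\[
F''(t)=4\varepsilon^{\ast}e^{-F(t)/K},\qquad F(0)=\alpha,\qquad F'(0)=0 .
\]
The condition $F'(0)=0$ is exactly ``$\dot\Phi$ bounded near $0$''; letting $t\to0$ here forces the compatibility value $\dot\Phi(0)=\tfrac12F''(0)=2\varepsilon^{\ast}e^{-\alpha/K}$ of the original initial slope.

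I would then apply the Picard--Lindel\"of theorem to the transformed problem: $g(\phi):=4\varepsilon^{\ast}e^{-\phi/K}$ is $C^{\infty}$ (hence locally Lipschitz) on $\mathbb{R}$, so there is a unique local $C^{2}$ (indeed real--analytic) solution $F$. Since $g>0$, $F$ is convex and, with $F'(0)=0$, nondecreasing on $[0,\infty)$; thus $F\ge\alpha$ and $0\le F''=g(F)\le 4\varepsilon^{\ast}e^{-\alpha/K}$, so $F'$ grows at most linearly and $F$ at most quadratically, which precludes finite--time blow--up and extends $F$ to all of $[0,\infty)$. The equation and the data being invariant under $t\mapsto-t$, uniqueness makes $F$ even, hence a smooth function of $t^{2}=s$; therefore $\Phi(s):=F(\sqrt{s})$ lies in $C^{\infty}[0,\infty)\subset C^{2}[0,\infty)$ and solves (\ref{SecondorderElliptic}). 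Uniqueness in $C^{2}[0,\infty)$ follows by the same substitution, since any such solution of (\ref{SecondorderElliptic}) yields, via $t=\sqrt{s}$, a $C^{2}$ solution of the transformed Cauchy problem, which must be this $F$.

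The route that parallels Lemmas~9--10 of \cite{Y1} is also available: multiplying (\ref{SecondorderElliptic}) by $s^{-1/2}$ gives $\frac{d}{ds}\bigl(s^{1/2}\dot\Phi\bigr)=\varepsilon^{\ast}s^{-1/2}e^{-\Phi/K}$, and integrating twice (interchanging the order of integration) produces the equivalent integral equation
\[
\Phi(s)=\alpha+2\varepsilon^{\ast}\int_{0}^{s}\tau^{-1/2}\bigl(\sqrt{s}-\sqrt{\tau}\bigr)e^{-\Phi(\tau)/K}\,d\tau .
\]
The operator $T$ given by the right--hand side sends $\{\Phi\in C[0,R]:\Phi\ge\alpha\}$ into itself (the integrand is nonnegative), and with $|e^{-\phi_{1}/K}-e^{-\phi_{2}/K}|\le K^{-1}e^{-\alpha/K}|\phi_{1}-\phi_{2}|$ on $[\alpha,\infty)$ and $\int_{0}^{s}\tau^{-1/2}(\sqrt{s}-\sqrt{\tau})\,d\tau=s$ one gets a Lipschitz constant $2\varepsilon^{\ast}K^{-1}e^{-\alpha/K}R$. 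For $\varepsilon^{\ast}$ small this is a contraction on a fixed interval (say $R=1$), so Banach's theorem yields a unique local solution, which extends to $[0,\infty)$ by the standard continuation theory for $s>0$, the a priori bound $0\le\dot\Phi(s)\le 2\varepsilon^{\ast}e^{-\alpha/K}$ (read off from the integral formula) ruling out blow--up; one can even obtain the global statement directly by running the contraction in the weighted norm $\|\phi\|=\sup_{s\ge0}e^{-L\sqrt{s}}|\phi(s)|$ with $L$ large.

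I expect the only real obstacle to be the singular point $s=0$ --- choosing the right side condition there, seeing that it pins down $\dot\Phi(0)$, and checking honest $C^{2}$ regularity up to $0$ rather than merely $C^{1}$. The substitution $s=t^{2}$ disposes of all of this at once, after which the argument is routine Picard--Lindel\"of plus an elementary global--continuation estimate; the secondary point needing attention is that the contraction constant above degrades like $O(\varepsilon^{\ast}R)$, which is why the global step must be done by continuation or by the weighted norm.
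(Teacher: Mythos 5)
Your proposal is correct, and your primary route is genuinely different from the paper's. The paper proves the lemma exactly along the lines of your second route: it rewrites the equation as $\frac{d}{ds}\bigl(s^{1/2}\dot{\Phi}(s)\bigr)=\varepsilon^{\ast}s^{-1/2}e^{-\Phi(s)/K}$ (up to a sign slip), integrates once to get $\dot{\Phi}(s)=\varepsilon^{\ast}s^{-1/2}\int_{0}^{s}\tau^{-1/2}e^{-\Phi(\tau)/K}\,d\tau=:f(s,\Phi(s))$, and runs a sup-norm contraction for $T\Phi(s)=\alpha+\int_{0}^{s}f(\tau,\Phi(\tau))\,d\tau$ on $C[0,s_{0}]$ under the smallness condition $\varepsilon^{\ast}/K<1$, then asserts globality because $\varepsilon^{\ast}$ does not depend on the variable. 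Your version of that argument is tighter on precisely the points where the paper is loose: you collapse the double integral by Fubini, you retain the interval length $R$ (and the lower bound $\Phi\ge\alpha$ needed to get the Lipschitz constant $K^{-1}e^{-\alpha/K}$ for $e^{-\phi/K}$), whereas the paper's claimed contraction constant $\varepsilon^{\ast}/K$ omits the factor of $s_{0}$ and so its smallness criterion cannot by itself give the global statement; and you supply an honest globalization (the a priori bound $0\le\dot{\Phi}\le 2\varepsilon^{\ast}e^{-\alpha/K}$ plus continuation, or a weighted norm). Your first route, the substitution $s=t^{2}$ turning the problem into the regular Cauchy problem $F''=4\varepsilon^{\ast}e^{-F/K}$, $F(0)=\alpha$, $F'(0)=0$, is not in the paper at all; it is cleaner, needs no smallness of $\varepsilon^{\ast}$ (so it proves a stronger statement than the lemma asks for), and the evenness/Whitney step gives genuine $C^{2}$ (indeed $C^{\infty}$) regularity up to $s=0$, which the paper's jump from a fixed point in $C[0,s_{0}]$ to ``$\Phi\in C^{2}[0,\infty)$'' leaves implicit. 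One further point in your favour: both of your routes show that the equation itself forces the compatibility value $\dot{\Phi}(0)=2\varepsilon^{\ast}e^{-\alpha/K}$ for any $C^{2}[0,\infty)$ solution, so the slope $-\varepsilon^{\ast}e^{-\alpha/K}$ written in the lemma (and $+\varepsilon^{\ast}e^{-\alpha/K}$ in Theorem \ref{thm2 copy(2)}) cannot be prescribed independently; your reading effectively corrects this inconsistency, which the paper's own integral formulation also silently overrides.
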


\begin{proof}
The lemma can be proved by the fixed point theorem. The equation
(\ref{SecondorderElliptic}) can be rewritten as:%
\begin{align}
s^{1/2}\frac{d}{ds}\left(  s^{1/2}\dot{\Phi}(s)\right)   &
=-{\normalsize \varepsilon}^{\ast}e^{-\tfrac{\Phi(s)}{K}},\\
\frac{d}{ds}\left(  s^{1/2}\dot{\Phi}(s)\right)   &  =\frac
{-{\normalsize \varepsilon}^{\ast}e^{-\tfrac{\Phi(s)}{K}}}{s^{1/2}}.
\end{align}
With the initial conditions: $\Phi(0)=\alpha$ and $\dot{\Phi}%
(0)=-{\normalsize \varepsilon}^{\ast}{\normalsize e}^{-\frac{\alpha}{K}}%
$,\ the equation (\ref{SecondorderElliptic}) is reduced to%
\begin{equation}
\dot{\Phi}(s)=\dfrac{-{\normalsize \varepsilon}^{\ast}}{s^{1/2}}\int_{0}%
^{s}\frac{e^{-\tfrac{\Phi(\tau)}{K}}}{\tau^{1/2}}d\tau.
\end{equation}
Set%
\begin{equation}
{\normalsize f(s,\Phi(s))=}\frac{-{\normalsize \varepsilon}^{\ast}}{s^{1/2}%
}\int_{0}^{s}{\normalsize \frac{e^{-\tfrac{\Phi(\tau)}{K}}}{\tau^{1/2}}d\tau.}%
\end{equation}
For any $s_{0}>0$, we get $f\in C^{1}[0,$ $s_{0}]$. And for any $\Phi_{1,}$
$\Phi_{2}\in C^{2}[0,$ $s_{0}]$, we have,%
\begin{equation}
\left\vert f(s,\Phi_{1}(s))-f(s,\Phi_{2}(s))\right\vert =\frac
{{\normalsize \varepsilon}^{\ast}\left\vert \int_{0}^{s}\left(  \frac
{e^{-\tfrac{\Phi_{2}(\tau)}{K}}-e^{-\tfrac{\Phi_{1}(\tau)}{K}}}{\tau^{1/2}%
}\right)  d\tau\right\vert }{s^{1/2}}.
\end{equation}
As $e^{\Phi}$ is a $C^{1}$ function of $\Phi$, we can show that the function
$e^{\Phi}$, is Lipschitz-continuous. Then we get,%
\begin{equation}
\left\vert f(s,\Phi_{1}(s))-f(s,\Phi_{2}(s)\right\vert =\frac
{{\normalsize \varepsilon}^{\ast}\int_{0}^{s}\left\vert \frac{\left(  \Phi
_{2}(\tau\right)  -\Phi_{1}(\tau)}{\tau^{1/2}}\right\vert d\tau}{Ks^{1/2}}%
\leq\frac{{\normalsize \varepsilon}^{\ast}}{{\normalsize K}}\underset{0\leq
s\leq s_{0}}{\sup}\left\vert \Phi_{1}(\tau)-\Phi_{2}(\tau)\right\vert .
\end{equation}
Let%
\begin{equation}
{\normalsize T\Phi(s)=\alpha+}\int_{0}^{s}{\normalsize f(\tau,\Phi(\tau
))d\tau.}%
\end{equation}
We have $T\Phi\in C[0,$ $s_{0}]$\ and%
\begin{equation}
\left\vert T\Phi_{1}(s)-T\Phi_{2}(s)\right\vert =\left\vert \int_{0}^{s}%
f(\tau,\Phi_{1}(\tau))d\tau-\int_{0}^{s}f(\tau,\Phi_{2}(\tau))d\tau\right\vert
\leq\frac{{\normalsize \varepsilon}^{\ast}}{{\normalsize K}}\underset{0\leq
s\leq s_{0}}{\sup}\left\vert \Phi(s)_{1}-\Phi(s)_{2}\right\vert .
\end{equation}
By choosing the constant $\varepsilon^{\ast}$ suck that $0<\frac
{{\normalsize \varepsilon}^{\ast}}{{\normalsize K}}<1$, this shows that the
mapping $T:C[0,$ $s_{0}]\rightarrow C[0,$ $s_{0}]$, is a contraction with the
sup-norm. By the fixed point theorem, there exists a unique $\Phi(s)\in C[0,$
$s_{0}],$\ such that $T\Phi(s)=\Phi(s)$. \newline It is because that the
chosen constant ${\normalsize \varepsilon}^{\ast}$ is independent of the
variable $x$. Therefore, we have the global unique solution $\Phi(s)\in
C^{2}[0,\infty)$. The proof is completed.
\end{proof}

Now, we are ready to check that the solutions fit into the Euler-Poisson
equations (\ref{eqeq1}).

\begin{proof}
[Proof of Theorem \ref{thm2 copy(2)}]By Lemma
\ref{lem:generalsolutionformasseq copy(1)} and Lemma \ref{lemma2 copy(1)}, the
solutions (\ref{ss2}) satisfy (\ref{eqeq1})$_{1}$ and (\ref{eqeq1})$_{3}$. For
the $x_{i}$-component of the isothermal momentum equations (\ref{eqeq1})$_{3}$
in $R^{N}$ $(N\geq3)$, we have%
\begin{align}
&  \rho\left(  \frac{\partial u_{1}}{\partial t}+\sum_{k=1}^{N}u_{k}%
\frac{\partial u_{1}}{\partial x_{k}}\right)  +\frac{\partial}{\partial
x}K\rho+\rho\frac{\partial\Phi}{\partial x}\\
&  =\rho\left[
\begin{array}
[c]{c}%
\frac{\partial}{\partial t}\frac{\dot{a}(t)}{2a(t)}x_{1}+\frac{\dot{a}%
(t)x_{1}}{2a(t)}\frac{\partial}{\partial x_{1}}\frac{\dot{a}(t)x_{1}}%
{2a(t)}+\frac{\dot{a}(t)x_{2}}{2a(t)}\frac{\partial}{\partial x_{2}}\frac
{\dot{a}(t)x_{1}}{2a(t)}\\
+\underset{i=3}{\overset{N}{\sum}}\frac{\dot{a}(t)(x_{1}+x_{2})}{2a(t)}%
\frac{\partial}{\partial x_{i}}\left(  \frac{\dot{a}(t)x_{1}}{2a(t)}\right)
\end{array}
\right]  \\
&  +K\frac{\partial}{\partial x}\frac{e^{-\frac{\Phi\left(  \frac{x_{1}%
^{2}+x_{2}^{2}}{a(t)}\right)  }{K}+C}}{a(t)}+\rho\dot{\Phi}\left(  \frac
{x_{1}^{2}+x_{2}^{2}}{a(t)}\right)  \frac{A}{a(t)}\\
&  =\rho\left[  \frac{1}{2}\left(  \frac{\ddot{a}(t)}{a(t)}-\frac{\dot{a}%
^{2}(t)}{a^{2}(t)}\right)  x_{1}+\frac{1}{4}\frac{\dot{a}(t)x_{1}}{a(t)}%
\frac{\dot{a}(t)}{a(t)}\right]  \\
&  -K\frac{e^{^{-\frac{\Phi\left(  \frac{x_{1}^{2}+x_{2}^{2}}{a(t)}\right)
}{K}+C}}}{a^{2}(t)}\dot{\Phi}\left(  \frac{x_{1}^{2}+x_{2}^{2}}{a(t)}\right)
\frac{2x_{1}}{Ka(t)}+\rho\dot{\Phi}\left(  \frac{x_{1}^{2}+x_{2}^{2}}%
{a(t)}\right)  \frac{2x_{1}}{a(t)}\\
&  =\rho\left[  \frac{1}{4}\left(  \frac{2\ddot{a}(t)}{a(t)}-\frac{\dot{a}%
^{2}(t)}{a^{2}(t)}\right)  x_{1}\right]  -\rho\dot{\Phi}\left(  \frac
{x_{1}^{2}+x_{2}^{2}}{a(t)}\right)  \frac{2x_{1}}{a(t)}+\rho\dot{\Phi}\left(
\frac{x_{1}^{2}+x_{2}^{2}}{a(t)}\right)  \frac{2x_{1}}{a(t)}\\
&  =0,
\end{align}
where we used%
\begin{equation}
2a(t)\ddot{a}(t)-\dot{a}^{2}(t)=0,\text{ }a(0)=a_{1}>0\text{, }\dot
{a}(0)=a_{2}.
\end{equation}
which is exactly solvable by Maple,%
\begin{equation}
a(t)=\frac{1}{4a_{1}}\left(  a_{2}t+2a_{1}\right)  ^{2}.
\end{equation}
For the $x_{2}$-component of the isothermal momentum equations (\ref{eqeq1}%
)$_{3}$ in $R^{N}$, we have%
\begin{align}
&  \rho\left(  \frac{\partial u_{2}}{\partial t}+\sum_{k=1}^{N}u_{k}%
\frac{\partial u_{2}}{\partial x_{k}}\right)  +\frac{\partial}{\partial x_{2}%
}K\rho+\rho\frac{\partial\Phi}{\partial x_{2}}\\
&  =\rho\left[  \frac{\partial}{\partial t}\frac{\dot{a}(t)x_{2}}{2a(t)}%
+\frac{\dot{a}(t)x_{1}}{2a(t)}\frac{\partial}{\partial x_{1}}\frac{\dot{a}%
(t)}{2a(t)}x_{2}+\frac{\dot{a}(t)}{2a(t)}x_{2}\frac{\partial}{\partial x_{2}%
}\frac{\dot{a}(t)x_{2}}{2a(t)}+\sum_{k=3}^{N}u_{k}\frac{\partial}{\partial
x_{k}}\frac{\dot{a}(t)}{2a(t)}(x_{1}+x_{2})\right]  \\
&  +K\frac{\partial}{\partial x}\frac{e^{-\frac{\Phi\left(  \frac{x_{1}%
^{2}+x_{2}^{2}}{a(t)}\right)  }{K}+C}}{a^{2}(t)}+\rho\frac{\partial\Phi\left(
\frac{x^{2}+x_{2}^{2}}{a(t)}\right)  }{\partial x_{2}}\\
&  =\rho\left[  \frac{1}{2}\left(  \frac{\ddot{a}(t)}{a(t)}-\frac{\dot{a}%
^{2}(t)}{a^{2}(t)}\right)  x_{2}+\frac{\dot{a}(t)x_{2}}{4a(t)}\frac{\dot
{a}(t)}{a(t)}\right]  +0+0\\
&  =\rho\left[  \frac{1}{4}\left(  \frac{2\ddot{a}(t)}{a(t)}-\frac{\dot{a}%
^{2}(t)}{a^{2}(t)}\right)  x_{2}\right]  \\
&  =0.
\end{align}
For the $x_{i}$-component $(i\geq3)$ of the isothermal momentum equations
(\ref{eqeq1})$_{3}$ in $R^{N}$, we have%
\begin{align}
&  \rho\left(  \frac{\partial u_{i}}{\partial t}+\sum_{k=1}^{N}u_{k}%
\frac{\partial u_{i}}{\partial x_{k}}\right)  +\frac{\partial}{\partial x_{i}%
}K\rho+\rho\frac{\partial\Phi}{\partial x_{i}}\\
&  =\rho\left[
\begin{array}
[c]{c}%
\frac{\partial}{\partial t}\frac{\dot{a}(t)(x_{1}+x_{2})}{2a(t)}+\frac{\dot
{a}(t)x_{1}}{2a(t)}\frac{\partial}{\partial x_{1}}\frac{\dot{a}(t)x_{1}%
}{2a(t)}+\frac{\dot{a}(t)x_{2}}{2a(t)}\frac{\partial}{\partial x_{2}}%
\frac{\dot{a}(t)x_{2}}{2a(t)}\\
+\sum_{k=3}^{N}u_{k}\frac{\partial}{\partial x_{k}}\frac{\dot{a}%
(t)(x_{1}+x_{2})}{2a(t)}%
\end{array}
\right]  \\
&  +K\frac{\partial}{\partial x_{i}}\frac{e^{-\frac{\Phi\left(  \frac
{x_{1}^{2}+x_{2}^{2}}{a(t)}\right)  }{K}+C}}{a^{2}(t)}+\rho\frac{\partial
}{\partial x_{i}}\dot{\Phi}\left(  \frac{x_{1}^{2}+x_{2}^{2}}{a(t)}\right)  \\
&  =\rho\left[  \frac{1}{2}\left(  \frac{\ddot{a}(t)}{a(t)}-\frac{\dot{a}%
^{2}(t)}{a^{2}(t)}\right)  (x_{1}+x_{2})+\frac{\dot{a}(t)x_{1}}{2a(t)}%
\frac{\dot{a}(t)}{2a(t)}+\frac{\dot{a}(t)x_{2}}{2a(t)}\frac{\dot{a}(t)}%
{2a(t)}\right]  \\
&  =\rho\left[  \frac{1}{4}\left(  \frac{2\ddot{a}(t)}{a(t)}-\frac{\dot{a}%
^{2}(t)}{a^{2}(t)}\right)  (x_{1}+x_{2})\right]  \\
&  =0.
\end{align}
Therefore, our solutions satisfy the Euler-Poisson equations. In particular,
$a_{1}>0$ and $a_{2}<0$, the solutions (\ref{ss2}) blow up in the finite time
$T=-a_{2}/a_{1}$.

The proof is completed.
\end{proof}

It is clear to see the blowup rate of the solutions (\ref{ss2}):

\begin{corollary}
\label{thm:2 copy(1)}The blowup rate of the solutions (\ref{ss2}) is,%
\begin{equation}
\underset{t\rightarrow T}{\lim}\rho(t,\vec{0})\left(  T-t\right)  ^{2}\geq
O(1).
\end{equation}

\end{corollary}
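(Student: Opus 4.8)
The plan is to read off the density at the spatial origin and then insert the closed form of $a(t)$. Evaluating (\ref{ss2}) at $\vec{x}=\vec{0}$ makes the argument $x_{1}^{2}+x_{2}^{2}=0$ of $\Phi$ vanish identically in $t$, so by the initial condition $\Phi(0)=\alpha$ of (\ref{asasasa1}) — which is meaningful since Lemma \ref{lemma2} guarantees $\Phi\in C^{2}[0,\infty)$ — one gets
\begin{equation}
\rho(t,\vec{0})=\frac{1}{a(t)}\,e^{-\tfrac{\alpha}{K}+C},
\end{equation}
a quantity whose entire time dependence is carried by the factor $1/a(t)$.

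Next I would substitute the explicit scaling function. From $a(t)=\frac{1}{4a_{1}}\left(a_{2}t+2a_{1}\right)^{2}$ we factor $a_{2}t+2a_{1}=a_{2}(t-T)$, where $T$ is the blowup time, i.e. the zero of $a$; hence $a(t)=\frac{a_{2}^{2}}{4a_{1}}(T-t)^{2}$. Multiplying,
\begin{equation}
\rho(t,\vec{0})\left(T-t\right)^{2}=\frac{4a_{1}}{a_{2}^{2}}\,e^{-\tfrac{\alpha}{K}+C},
\end{equation}
which (using $a_{1}>0$, $a_{2}\neq0$, $K>0$) is a strictly positive constant, independent of $t$. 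Passing to the limit $t\rightarrow T$, the left-hand side converges to this constant; in particular it is bounded below by a positive constant, i.e. $\geq O(1)$, which is exactly the assertion of the corollary.

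There is essentially no obstacle here: the statement follows at once from Theorem \ref{thm2 copy(2)} and the explicit expression for $a(t)$. The only points that deserve a line of justification are that the value $\Phi(0)=\alpha$ may legitimately be used (this is precisely what the existence/regularity Lemma \ref{lemma2} provides) and that the factorization $a_{2}t+2a_{1}=a_{2}(t-T)$ correctly identifies the blowup time $T$ appearing in the limit. In fact the computation yields slightly more than stated: the limit exists and equals $\dfrac{4a_{1}}{a_{2}^{2}}e^{C-\tfrac{\alpha}{K}}$, so the blowup rate $(T-t)^{-2}$ is sharp.
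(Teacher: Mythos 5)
Your proof is correct and is precisely the immediate computation the paper has in mind: the corollary is stated without a written proof (``it is clear to see''), the intended argument being exactly your evaluation $\rho(t,\vec{0})=e^{C-\frac{\alpha}{K}}\,e^{C}/a(t)$ --- more precisely $\rho(t,\vec{0})=\frac{1}{a(t)}e^{-\frac{\alpha}{K}+C}$ --- followed by inserting $a(t)=\frac{1}{4a_{1}}(a_{2}t+2a_{1})^{2}=\frac{a_{2}^{2}}{4a_{1}}(T-t)^{2}$, which gives the finite positive limit $\frac{4a_{1}}{a_{2}^{2}}e^{C-\frac{\alpha}{K}}$. The one point worth flagging is that your blowup time, the zero of $a(t)$, is $T=-2a_{1}/a_{2}$, whereas the theorem in the paper prints $T=-a_{2}/a_{1}$; the printed value appears to be a slip, and your identification of $T$ is the one for which the stated limit holds.
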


\section{Special Blowup Solutions II}

In the recent paper \cite{Y2}, we have the special solutions for the
isothermal Euler-Poisson equations in $R^{2},$ in the following from:%
\begin{equation}
\left\{
\begin{array}
[c]{c}%
\rho(t,x_{1},x_{2})=\frac{1}{a^{2}(t)}e^{-\frac{\Phi\left(  \frac
{Ax_{1}+Bx_{2}}{a(t)}\right)  }{K}+C}\text{, }{\normalsize \vec{u}(t,x}%
_{1}{\normalsize ,x}_{2}{\normalsize )=\dfrac{\overset{\cdot}{a}(t)}{a(t)}%
(}x_{1},x_{2}){\normalsize ,}\\
a(t)=a_{1}+a_{2}t,\\
\ddot{\Phi}(s)-\epsilon^{\ast}e^{-\frac{\Phi(s)}{K}}=0,\text{ }\Phi
(0)=\alpha,\text{ }\dot{\Phi}(0)=\beta,
\end{array}
\right.
\end{equation}
where $A,$ $B,$ ${\normalsize a}_{1}\neq0,$ ${\normalsize a}_{2\text{ }},$
$\frac{2\pi e^{C}}{A^{2}+B^{2}}=\epsilon^{\ast}>0$, $\alpha$ and $\beta$ are
constants.\newline

In this section, we extend the above solutions to the $N$-dimensional
Euler-Poisson equations (\ref{eqeq1}) in the following theorem:

\begin{theorem}
\label{thm2}For the isothermal Euler-Poisson equations (\ref{eqeq1}) in
$R^{N}$ $(N\geq3)$, there exists a family of solutions,%
\begin{equation}
\left\{
\begin{array}
[c]{c}%
\rho(t,\vec{x})=\frac{1}{a^{2}(t)}e^{-\frac{\Phi\left(  \frac{Ax_{1}+Bx_{2}%
}{a(t)}\right)  }{K}+C}\text{, }{\normalsize \vec{u}(t,\vec{x})=\dfrac
{\overset{\cdot}{a}(t)}{a(t)}(}x_{1},x_{2},x_{1},...,x_{1}){\normalsize ,}\\
a(t)=a_{1}+a_{2}t,\\
\ddot{\Phi}(s)-\epsilon^{\ast}e^{-\frac{\Phi(s)}{K}}=0,\text{ }\Phi
(0)=\alpha,\text{ }\dot{\Phi}(0)=\beta,
\end{array}
\right.  \label{ss21}%
\end{equation}
where $A,$ $B,$ ${\normalsize a}_{1}\neq0,$ ${\normalsize a}_{2\text{ }},$
$\frac{\alpha(N)e^{C}}{A^{2}+B^{2}}=\epsilon^{\ast}>0$, $\alpha$ and $\beta$
are constants.\newline In particular, $a_{1}>0$ and $a_{2}<0$, the solutions
(\ref{ss21}) blow up in the finite time $T=-a_{2}/a_{1}$.
\end{theorem}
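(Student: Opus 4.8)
The plan is to verify directly that the triple $(\rho,\vec u,\Phi)$ displayed in (\ref{ss21}) satisfies each of the three equations (\ref{eqeq1})$_1$, (\ref{eqeq1})$_2$, (\ref{eqeq1})$_3$, in exact analogy with the proof of Theorem \ref{thm2 copy(2)}, the only structural changes being that the argument of $\Phi$ is now the linear form $Ax_1+Bx_2$ rather than the quadratic $x_1^2+x_2^2$, that the density carries $a^2(t)$ instead of $a(t)$ in the denominator, and that the velocity field is $\tfrac{\dot a(t)}{a(t)}(x_1,x_2,x_1,\dots,x_1)$. I would first write $s:=(Ax_1+Bx_2)/a(t)$ and record the elementary derivatives $\partial_{x_1}s = A/a(t)$, $\partial_{x_2}s=B/a(t)$, $\partial_{x_i}s=0$ for $i\ge 3$, and $\partial_t s = -\dot a(t)(Ax_1+Bx_2)/a^2(t) = -\dot a(t)s/a(t)$.

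For the continuity equation I would establish the analogue of Lemma \ref{lem:generalsolutionformasseq copy(1)}: plugging $\rho = a^{-2}(t)\,f(s)$ and the given $\vec u$ into $\rho_t + \nabla\cdot(\rho\vec u)$, the term $\rho_t$ produces $-2\dot a a^{-3} f - a^{-2}\dot f\,\partial_t s$, while $\nabla\cdot\vec u = \tfrac{\dot a}{a}(1+1+0+\cdots+0)=\tfrac{2\dot a}{a}$ gives $\rho\,\nabla\cdot\vec u = 2\dot a a^{-3} f$, and $\nabla\rho\cdot\vec u = a^{-2}\dot f\,\nabla s\cdot\vec u = a^{-2}\dot f\,\tfrac{\dot a}{a}(A x_1 + B x_2) = a^{-2}\dot f\,\dot a\,s$, which cancels the surviving $-a^{-2}\dot f\,\partial_t s$ term; everything sums to zero. (Here it matters that the $i\ge 3$ components of $\vec u$ involve only $x_1$, so they contribute nothing to $\nabla\cdot\vec u$ and nothing to $\nabla s\cdot\vec u$.) For the Poisson equation I would compute $\Delta\Phi(s) = \ddot\Phi(s)\,|\nabla s|^2 + \dot\Phi(s)\,\Delta s = \ddot\Phi(s)(A^2+B^2)/a^2(t) + 0$, set it equal to $\alpha(N)\rho = \alpha(N)a^{-2}(t)e^{-\Phi(s)/K + C}$, factor out $(A^2+B^2)/a^2(t)$, and observe that the bracket becomes $\ddot\Phi(s) - \tfrac{\alpha(N)e^C}{A^2+B^2}e^{-\Phi(s)/K} = \ddot\Phi(s)-\epsilon^\ast e^{-\Phi(s)/K}$, which vanishes by the ODE in (\ref{ss21}); this is the place that forces the defining relation $\alpha(N)e^C/(A^2+B^2)=\epsilon^\ast$.

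The momentum equations are the substantive check and I expect the $x_i$-components with $i\ge 3$ to be the main obstacle, because there $u_i = \tfrac{\dot a}{a}x_1$ depends on $x_1$ while the convective term $\sum_k u_k\partial_{x_k}u_i$ mixes the $x_1$- and $x_2$-directions; one must confirm that the pressure gradient and the gravitational term $\rho\,\partial_{x_i}\Phi(s)$ both vanish for $i\ge 3$ (they do, since $\partial_{x_i}s = 0$), so that the entire $i$-th equation reduces to $\rho$ times an expression in $a,\dot a,\ddot a$ that is killed by the ODE for $a(t)$. I would divide each momentum equation by $\rho$, collect the inertial terms, and check that they combine into a multiple of $2a\ddot a - \dot a^2$ (equivalently $\tfrac{\ddot a}{a}-\tfrac{\dot a^2}{a^2}$ plus the quadratic self-interaction term), while the pressure term $K\,\partial_{x_j}\log\rho = -\partial_{x_j}\Phi(s) = -\dot\Phi(s)\partial_{x_j}s$ exactly cancels the Poisson-source term $\rho^{-1}\cdot\rho\,\partial_{x_j}\Phi(s)$ in the $x_1$- and $x_2$-equations. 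The choice $a(t)=a_1+a_2 t$, i.e. $\ddot a\equiv 0$, then annihilates what remains; note this differs from Theorem \ref{thm2 copy(2)}, where the quadratic dependence of $\rho$ on $x$ forced the Emden-type relation $2a\ddot a - \dot a^2 = 0$, whereas the linear dependence here only demands $\ddot a = 0$. Finally, existence and regularity of $\Phi\in C^2[0,\infty)$ solving $\ddot\Phi-\epsilon^\ast e^{-\Phi/K}=0$ with the stated data is standard (it is an autonomous second-order ODE with smooth right-hand side, so Picard–Lindelöf gives a local solution, and a first integral $\tfrac12\dot\Phi^2 + K\epsilon^\ast e^{-\Phi/K} = \text{const}$ controls it globally), and blowup in finite time $T=-a_2/a_1$ when $a_1>0,a_2<0$ is immediate from $a(T)=0$, at which point $\rho\sim a^{-2}(t)\to\infty$. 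This yields $\lim_{t\to T}\rho(t,\vec 0)(T-t)^2 \ge O(1)$ as in Corollary \ref{thm:2 copy(1)}.
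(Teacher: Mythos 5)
Your verification follows essentially the same route as the paper: the same continuity and Poisson lemmas for the ansatz $\rho=a^{-2}(t)f\left((Ax_{1}+Bx_{2})/a(t)\right)$ with $s:=(Ax_{1}+Bx_{2})/a(t)$ forcing $\epsilon^{\ast}=\alpha(N)e^{C}/(A^{2}+B^{2})$, the same componentwise momentum check in which the pressure gradient cancels the gravitational term for the $x_{1}$- and $x_{2}$-components and both vanish for $i\geq3$, and the same closure condition $\ddot{a}\equiv0$, i.e. $a(t)=a_{1}+a_{2}t$, with blowup read off from $a(T)=0$ at $T=-a_{2}/a_{1}$ (your first-integral argument for global existence of $\Phi$ is in fact more explicit than the paper's remark invoking the fixed point theorem). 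One passing phrase, that the inertial terms combine into a multiple of $2a\ddot{a}-\dot{a}^{2}$, is a slip since here they combine into $\frac{\ddot{a}(t)}{a(t)}x_{j}$, but you immediately correct it by noting that the linear dependence only demands $\ddot{a}=0$, so the argument stands as written.
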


Before presenting the proof of Theorem \ref{thm2 copy(2)}, we prepare the
following two lemmas.

\begin{lemma}
\label{lem:generalsolutionformasseq copy(1)}For the continuity equation
(\ref{eqeq1})$_{1}$ in $R^{N}$, there exist solutions,%
\begin{equation}
\rho(t,\vec{x})=\frac{f\left(  \dfrac{Ax_{1}+Bx_{2}}{a(t)}\right)  }{a^{2}%
(t)},\text{ }{\normalsize \vec{u}(t,\vec{x})=\frac{\overset{\cdot}{a}%
(t)}{a(t)}(}x_{1},x_{2},x_{1},...,x_{1}),
\end{equation}
where the scalar function $f(s)\geq0\in C^{1}$ and $a(t)\neq0\in C^{1}.$
\end{lemma}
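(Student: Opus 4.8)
The plan is to verify the claim by direct substitution into the continuity equation $\rho_t+\nabla\cdot\vec u\,\rho+\nabla\rho\cdot\vec u=0$, exactly in the spirit of the proof of the preceding lemma. Write $s:=(Ax_1+Bx_2)/a(t)$, so that $\rho=f(s)/a^2(t)$ and every spatial or temporal derivative reduces, via the chain rule, to a derivative of $f$ together with a derivative of $s$. First I would record the elementary identities $s_t=-(\dot a/a)\,s$ (which holds because $Ax_1+Bx_2=a(t)s$) and $\nabla s=a(t)^{-1}(A,B,0,\dots,0)$.

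Next I would compute the three terms separately. For $\rho_t$, the time derivative hits both the explicit factor $a^{-2}(t)$ and the argument $s$ inside $f$, giving $\rho_t=-\dfrac{\dot a}{a^3}\big(s\dot f(s)+2f(s)\big)$. For the divergence term, the key observation is that
\[
\nabla\cdot\vec u=\frac{\dot a}{a}\Big(\partial_{x_1}x_1+\partial_{x_2}x_2+\sum_{i=3}^N\partial_{x_i}x_1\Big)=\frac{2\dot a}{a},
\]
because the repeated entry $x_1$ sitting in slots $3,\dots,N$ is differentiated with respect to $x_i$ with $i\ge 3$ and hence contributes nothing; thus $\nabla\cdot\vec u\,\rho=\dfrac{2\dot a}{a^3}f(s)$. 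For the transport term, $\nabla\rho=\dfrac{\dot f(s)}{a^3}(A,B,0,\dots,0)$ has nonzero components only in the first two coordinates, so $\nabla\rho\cdot\vec u=\dfrac{\dot f(s)}{a^3}\cdot\dfrac{\dot a}{a}(Ax_1+Bx_2)=\dfrac{\dot a}{a^3}\,s\dot f(s)$.

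Adding the three pieces, the $s\dot f(s)$ contributions cancel between $\rho_t$ and the transport term, while the $2f(s)$ contributions cancel between $\rho_t$ and the divergence term, so the sum is identically zero. This shows the stated pair $(\rho,\vec u)$ solves $(\ref{eqeq1})_1$ for any $C^1$ functions $f\ge 0$ and $a\ne 0$.

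There is no genuine obstacle here; the only things to watch are the bookkeeping of the chain rule and — more to the point — the realization that the particular form of $\vec u$, namely the entry $x_1$ placed in positions $3$ through $N$, is engineered precisely so that $\nabla\cdot\vec u$ is the spatially constant $2\dot a/a$ while $\nabla\rho\cdot\vec u$ reproduces the term $s\dot f(s)$ required for the cancellation. The compatibility of this same velocity profile with the momentum and Poisson equations — which is what forces the ODEs for $a(t)$ and $\Phi$ — is handled by the subsequent lemmas and by the proof of Theorem \ref{thm2}, not by this lemma.
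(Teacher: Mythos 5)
Your proposal is correct and follows essentially the same route as the paper: direct substitution of $\rho=f(s)/a^{2}(t)$, $s=(Ax_{1}+Bx_{2})/a(t)$, into $(\ref{eqeq1})_{1}$, using $\nabla\cdot\vec u=2\dot a/a$ (the entries $x_{1}$ in slots $3,\dots,N$ contribute nothing) and the chain rule, after which the $s\dot f(s)$ and $2f(s)$ terms cancel exactly as in the paper's computation. Your bookkeeping via the identities $s_{t}=-(\dot a/a)s$ and $\nabla s=a^{-1}(A,B,0,\dots,0)$ is just a cleaner organization of the same calculation.
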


\begin{proof}
We plug the solutions (\ref{ss21}) into the continuity equation (\ref{eqeq1}%
)$_{1}$,
\begin{align}
&  \rho_{t}+\nabla\cdot\vec{u}\rho+\nabla\rho\cdot\vec{u}\\
&  =\frac{\partial}{\partial t}\left[  \frac{f\left(  \dfrac{Ax_{1}+Bx_{2}%
}{a(t)}\right)  }{a^{2}(t)}\right]  +\nabla\cdot\frac{\dot{a}(t)}{a(t)}%
(x_{1},x_{2},x_{1},....,x_{1})\frac{f\left(  \dfrac{Ax_{1}+Bx_{2}}%
{a(t)}\right)  }{a^{2}(t)}\\
&  +\nabla\frac{f\left(  \dfrac{Ax_{1}+Bx_{2}}{a(t)}\right)  }{a^{2}(t)}%
\cdot\frac{\dot{a}(t)}{a(t)}(x_{1},x_{2},x_{1},...,x_{1})\\
&  =\frac{-2\dot{a}(t)}{a^{3}(t)}f\left(  \frac{Ax_{1}+Bx_{2}}{a(t)}\right)
+\frac{1}{a^{2}(t)}\frac{\partial}{\partial t}f\left(  \frac{Ax_{1}+Bx_{2}%
}{a(t)}\right)  +\frac{\dot{a}(t)}{a(t)}\left(  \frac{\partial}{\partial
x_{1}}x_{1}+\frac{\partial}{\partial x_{2}}x_{2}+\overset{N}{\underset
{i=3}{\sum}}\frac{\partial}{\partial x_{i}}x_{1}\right)  \frac{f\left(
\dfrac{Ax_{1}+Bx_{2}}{a(t)}\right)  }{a^{2}(t)}\\
&  +\frac{\dot{a}(t)}{a(t)}\left[  \frac{\partial}{\partial x_{1}}%
\frac{f\left(  \dfrac{Ax_{1}+Bx_{2}}{a(t)}\right)  }{a^{2}(t)}\cdot
x_{1}+\frac{\partial}{\partial x_{2}}\frac{f\left(  \dfrac{Ax_{1}+Bx_{2}%
}{a(t)}\right)  }{a^{2}(t)}\cdot x_{2}+\overset{N}{\underset{i=3}{\sum}}%
\frac{\partial}{\partial x_{i}}\frac{f\left(  \dfrac{Ax_{1}+Bx_{2}}%
{a(t)}\right)  }{a^{2}(t)}\cdot x_{1}\right]  \\
&  =\frac{-2\dot{a}(t)}{a^{3}(t)}f\left(  \frac{Ax_{1}+Bx_{2}}{a(t)}\right)
-\frac{1}{a^{2}(t)}\dot{f}\left(  \frac{Ax_{1}+Bx_{2}}{a(t)}\right)
\frac{(Ax_{1}+Bx_{2})\dot{a}(t)}{a^{2}(t)}+2\frac{\dot{a}(t)}{a(t)}%
\frac{f\left(  \dfrac{Ax_{1}+Bx_{2}}{a(t)}\right)  }{a^{2}(t)}\\
&  +\frac{\dot{a}(t)}{a(t)}\left[  \frac{\dot{f}\left(  \dfrac{Ax_{1}+Bx_{2}%
}{a(t)}\right)  }{a^{2}(t)}\frac{Ax_{1}}{a(t)}+\frac{\dot{f}\left(
\dfrac{Ax_{1}+Bx_{2}}{a(t)}\right)  }{a^{2}(t)}\frac{Bx_{2}}{a(t)}\right]  \\
&  =0.
\end{align}
The proof is completed.
\end{proof}

The following lemma handles the Poisson equation (\ref{eqeq1})$_{3}$ for our
solutions (\ref{ss21}):

\begin{lemma}
\label{lemma2 copy(1)}The solutions,
\begin{equation}
\rho=\frac{{\normalsize 1}}{a^{2}(t)}e^{-\frac{\Phi\left(  \frac{Ax_{1}%
+Bx_{2}}{a(t)}\right)  }{K}+C},\label{aass1}%
\end{equation}
with the second-order ordinary differential equation:%
\begin{equation}
\ddot{\Phi}(s)-\epsilon^{\ast}e^{-\tfrac{\Phi(s)}{K}}=0,\text{ }\Phi
(0)=\alpha\text{, }\dot{\Phi}(0)=\beta,
\end{equation}
where $s:=(Ax_{1}+Bx_{2})/a(t)$ and $C$, $\frac{\alpha(N)e^{C}}{A^{2}+B^{2}%
}=\epsilon^{\ast}$, $\alpha$ and $\beta$ are constants,\newline fit into the
Poisson equation (\ref{eqeq1})$_{3}$ in $R^{N}$.
\end{lemma}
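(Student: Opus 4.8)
The plan is to verify directly that the stated $\rho$, together with the gravitational potential obtained by composing the scalar profile $\Phi$ with $s:=(Ax_{1}+Bx_{2})/a(t)$, satisfies $\Delta\Phi(t,\vec{x})=\alpha(N)\rho$. The whole computation is a substitution, paralleling the proof of Lemma~\ref{lemma2 copy(1)} in Section~2, so I do not expect a genuine obstacle; the only points requiring care are that the auxiliary coordinates $x_{3},\dots,x_{N}$ enter neither $\rho$ nor the potential, and that the constant $\epsilon^{\ast}$ has been calibrated exactly to match $\alpha(N)e^{C}/(A^{2}+B^{2})$.

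First I would record the first partials of $s$. Since $s$ depends on $\vec{x}$ only through $x_{1},x_{2}$, one has $\partial_{x_{1}}s=A/a(t)$, $\partial_{x_{2}}s=B/a(t)$, and $\partial_{x_{i}}s=0$ for $i\geq 3$. Hence the gradient of the potential is $\partial_{x_{1}}\Phi(s)=\dot{\Phi}(s)\,A/a(t)$, $\partial_{x_{2}}\Phi(s)=\dot{\Phi}(s)\,B/a(t)$, and $\partial_{x_{i}}\Phi(s)=0$ for $i\geq 3$, so that the Laplacian collapses to the first two second derivatives:
\begin{equation}
\Delta\Phi=\partial_{x_{1}}^{2}\Phi(s)+\partial_{x_{2}}^{2}\Phi(s)=\ddot{\Phi}(s)\frac{A^{2}}{a^{2}(t)}+\ddot{\Phi}(s)\frac{B^{2}}{a^{2}(t)}=\frac{A^{2}+B^{2}}{a^{2}(t)}\ddot{\Phi}(s).
\end{equation}

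Next I would expand the right-hand side as $\alpha(N)\rho=\dfrac{\alpha(N)e^{C}}{a^{2}(t)}e^{-\Phi(s)/K}$ and subtract, which gives
\begin{equation}
\Delta\Phi-\alpha(N)\rho=\frac{A^{2}+B^{2}}{a^{2}(t)}\left(\ddot{\Phi}(s)-\frac{\alpha(N)e^{C}}{A^{2}+B^{2}}\,e^{-\Phi(s)/K}\right).
\end{equation}
With the choice $\epsilon^{\ast}=\alpha(N)e^{C}/(A^{2}+B^{2})$ the parenthesis is precisely the left-hand side of the prescribed ordinary differential equation $\ddot{\Phi}(s)-\epsilon^{\ast}e^{-\Phi(s)/K}=0$, and therefore vanishes identically. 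This establishes that $\rho$ together with its potential solves $(\ref{eqeq1})_{3}$ in $R^{N}$, completing the proof. (Existence and $C^{2}$-regularity of the profile $\Phi$ itself is a separate issue, handled by a contraction-mapping argument of the type in Lemma~\ref{lemma2}; here the ODE is even simpler and admits a first integral, $\dot{\Phi}^{2}=\dot{\Phi}(0)^{2}-2K\epsilon^{\ast}\big(e^{-\Phi/K}-e^{-\alpha/K}\big)$.)
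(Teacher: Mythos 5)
Your proposal is correct and follows essentially the same direct substitution as the paper's own proof: compute the gradient of $\Phi(s)$ with $s=(Ax_{1}+Bx_{2})/a(t)$, note the Laplacian reduces to the $x_{1},x_{2}$ second derivatives giving $\frac{A^{2}+B^{2}}{a^{2}(t)}\ddot{\Phi}(s)$, and cancel against $\alpha(N)\rho$ via the calibration $\epsilon^{\ast}=\alpha(N)e^{C}/(A^{2}+B^{2})$ and the ODE. No gaps.
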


\begin{proof}
We check that our potential function $\Phi(t,x_{1},x_{2})$ satisfies the
Poisson equation (\ref{eqeq1})$_{3}$:%
\begin{align}
&  {\normalsize \Delta\Phi(t,x}_{1}{\normalsize ,x}_{2}{\normalsize )-\alpha
(N)}{\normalsize \rho}\\
&  =\nabla\cdot\nabla\Phi\left(  \frac{Ax_{1}+Bx_{2}}{a(t)}\right)
-\frac{{\normalsize \alpha(N)}}{a^{2}(t)}e^{-\frac{^{\Phi\left(  \frac
{Ax_{1}+Bx_{2}}{a(t)}\right)  }}{K}+C}\\
&  =\nabla\cdot\left[  \frac{\partial}{\partial x_{1}}\Phi\left(  \frac
{Ax_{1}+Bx_{2}}{a(t)}\right)  ,\frac{\partial}{\partial x_{2}}\Phi\left(
\frac{Ax_{1}+Bx_{2}}{a(t)}\right)  ,\frac{\partial}{\partial x_{3}}\Phi\left(
\frac{Ax_{1}+Bx_{2}}{a(t)}\right)  ,...,\frac{\partial}{\partial x_{N}}%
\Phi\left(  \frac{Ax_{1}+Bx_{2}}{a(t)}\right)  \right]  \\
&  -\frac{{\normalsize \alpha(N)}}{a^{2}(t)}e^{-\frac{\Phi\left(  \frac
{Ax_{1}+Bx_{2}}{a(t)}\right)  }{K}+C}\\
&  =\nabla\cdot\left[  \dot{\Phi}\left(  \frac{Ax_{1}+Bx_{2}}{a(t)}\right)
\frac{A}{a(t)},\dot{\Phi}\left(  \frac{Ax_{1}+Bx_{2}}{a(t)}\right)  \frac
{B}{a(t)},\vec{0}\right]  -\frac{{\normalsize \alpha(N)}}{a^{2}(t)}%
e^{-\frac{\Phi\left(  \frac{Ax_{1}+Bx_{2}}{a(t)}\right)  }{K}+C}\\
&  =\frac{\partial}{\partial x_{1}}\left[  \dot{\Phi}(\frac{Ax_{1}+Bx_{2}%
}{a(t)})\frac{A}{a(t)}\right]  +\frac{\partial}{\partial x_{2}}\left[
\dot{\Phi}(\frac{Ax_{1}+Bx_{2}}{a(t)})\frac{B}{a(t)}\right]  -\frac
{{\normalsize \alpha(N)}}{a^{2}(t)}e^{-\frac{\Phi\left(  \frac{Ax_{1}+Bx_{2}%
}{a(t)}\right)  }{K}+C}\\
&  =\frac{A^{2}+B^{2}}{a^{2}(t)}\left(  \ddot{\Phi}(s)-\frac{\alpha(N)e^{C}%
}{A^{2}+B^{2}}e^{-\tfrac{\Phi(s)}{K}}\right)  ,
\end{align}
where we choose $s:=(Ax_{1}+Bx_{2})/a(t)$ and the ordinary differential
equation:%
\begin{equation}
\ddot{\Phi}(s)-\epsilon^{\ast}e^{-\tfrac{\Phi(s)}{K}}=0,\text{ }\Phi
(0)=\alpha\text{, }\dot{\Phi}(0)=\beta,
\end{equation}
with $\frac{\alpha(N)e^{C}}{A^{2}+B^{2}}=\epsilon^{\ast}$, $\alpha$ and
$\beta$ are constants$.$ Therefore, our solutions (\ref{aass1}) satisfy the
Poisson equation (\ref{eqeq1})$_{3}$.

The proof is completed.
\end{proof}

Now, we are ready to check that the solutions fit into the Euler-Poisson
equations (\ref{eqeq1}).

\begin{proof}
[Proof of Theorem \ref{thm2 copy(2)}]By Lemma
\ref{lem:generalsolutionformasseq copy(1)} and Lemma \ref{lemma2 copy(1)}, the
solutions (\ref{ss21}) satisfy (\ref{eqeq1})$_{1}$ and (\ref{eqeq1})$_{3}$.
For the $x_{1}$-component of the isothermal momentum equations (\ref{eqeq1}%
)$_{3}$ in $R^{N}$ $(N\geq3)$, we have%
\begin{align}
&  \rho\left(  \frac{\partial u_{1}}{\partial t}+\sum_{k=1}^{N}u_{k}%
\frac{\partial u_{1}}{\partial x_{k}}\right)  +\frac{\partial}{\partial x_{1}%
}K\rho+\rho\frac{\partial\Phi}{\partial x_{1}}\\
&  =\rho\left[  \frac{\partial}{\partial t}\frac{\dot{a}(t)x_{1}}{a(t)}%
+\frac{\dot{a}(t)x_{1}}{a(t)}\frac{\partial}{\partial x_{1}}\frac{\dot
{a}(t)x_{1}}{a(t)}+\frac{\dot{a}(t)x_{2}}{a(t)}\frac{\partial}{\partial x_{2}%
}\frac{\dot{a}(t)x_{1}}{a(t)}+\sum_{k=3}^{N}u_{k}\frac{\partial}{\partial
x_{k}}\frac{\dot{a}(t)x_{1}}{a(t)}\right]  \\
&  +K\frac{\partial}{\partial x_{1}}\frac{e^{-\frac{\Phi\left(  \frac
{Ax_{1}+Bx_{2}}{a(t)}\right)  }{K}+C}}{a^{2}(t)}+\rho\dot{\Phi}\left(
\frac{Ax_{1}+Bx_{2}}{a(t)}\right)  \frac{A}{a(t)}\\
&  =\rho\left[  \left(  \frac{\ddot{a}(t)}{a(t)}-\frac{\dot{a}^{2}(t)}%
{a^{2}(t)}\right)  x_{1}+\frac{\dot{a}(t)x_{1}}{a(t)}\frac{\dot{a}(t)}%
{a(t)}\right]  \\
&  -K\frac{e^{^{-\frac{\Phi\left(  \frac{Ax_{1}+Bx_{2}}{a(t)}\right)  }{K}+C}%
}}{a^{2}(t)}\dot{\Phi}\left(  \frac{Ax_{1}+Bx_{2}}{a(t)}\right)  \frac
{A}{Ka(t)}+\rho\dot{\Phi}\left(  \frac{Ax_{1}+Bx_{2}}{a(t)}\right)  \frac
{A}{a(t)}\\
&  =\rho\left[  \frac{\ddot{a}(t)x_{1}}{a(t)}\right]  -\rho\dot{\Phi}\left(
\frac{Ax_{1}+Bx_{2}}{a(t)}\right)  \frac{A}{a(t)}+\rho\dot{\Phi}\left(
\frac{Ax_{1}+Bx_{2}}{a(t)}\right)  \frac{A}{a(t)}\\
&  =0,
\end{align}
by taking $\ddot{a}(t)=0$ that is%
\begin{equation}
a(t)=a_{1}+a_{2}t.
\end{equation}
For the $x_{2}$-component of the isothermal momentum equations (\ref{eqeq1}%
)$_{3}$ in $R^{N}$, we have%
\begin{align}
&  \rho\left(  \frac{\partial u_{2}}{\partial t}+\sum_{k=1}^{N}u_{k}%
\frac{\partial u_{2}}{\partial x_{k}}\right)  +\frac{\partial}{\partial x_{2}%
}K\rho+\rho\frac{\partial\Phi}{\partial x_{2}}\\
&  =\rho\left[  \frac{\partial}{\partial t}\frac{\dot{a}(t)x_{2}}{a(t)}%
+\frac{\dot{a}(t)x_{1}}{a(t)}\frac{\partial}{\partial x_{1}}\frac{\dot
{a}(t)x_{2}}{a(t)}+\frac{\dot{a}(t)x_{2}}{a(t)}\frac{\partial}{\partial x_{2}%
}\frac{\dot{a}(t)x_{2}}{a(t)}+\sum_{k=3}^{N}u_{k}\frac{\partial}{\partial
x_{k}}\frac{\dot{a}(t)x_{2}}{a(t)}\right]  \\
&  +K\frac{\partial}{\partial x_{2}}\frac{e^{-\frac{\Phi\left(  \frac
{Ax_{1}+Bx_{2}}{a(t)}\right)  }{K}+C}}{a^{2}(t)}+\rho\frac{\partial\Phi\left(
\frac{Ax_{1}+Bx_{2}}{a(t)}\right)  }{\partial x_{2}}\\
&  =\rho\left[  \left(  \frac{\ddot{a}(t)}{a(t)}-\frac{\dot{a}^{2}(t)}%
{a^{2}(t)}\right)  x_{2}+\frac{\dot{a}(t)x_{2}}{a(t)}\frac{\dot{a}(t)}%
{a(t)}\right]  \\
&  -K\frac{e^{^{-\frac{\Phi\left(  \frac{Ax_{1}+Bx_{2}}{a(t)}\right)  }{K}+C}%
}}{a^{2}(t)}\dot{\Phi}\left(  \frac{Ax_{1}+Bx_{2}}{a(t)}\right)  \frac
{B}{Ka(t)}+\rho\dot{\Phi}\left(  \frac{Ax_{1}+Bx_{2}}{a(t)}\right)  \frac
{B}{a(t)}\\
&  =\rho\left[  \frac{\ddot{a}(t)x_{2}}{a(t)}\right]  \\
&  =0.
\end{align}
For the $x_{i}(i\geq3)$-component of the isothermal momentum equations
(\ref{eqeq1})$_{i}$ in $R^{N}$, we have%
\begin{align}
&  \rho\left(  \frac{\partial u_{3}}{\partial t}+\sum_{k=1}^{N}u_{k}%
\frac{\partial u_{3}}{\partial x_{k}}\right)  +\frac{\partial}{\partial x_{i}%
}K\rho+\rho\frac{\partial\Phi}{\partial x_{i}}\\
&  =\rho\left[  \frac{\partial}{\partial t}\frac{\dot{a}(t)x_{1}}{a(t)}%
+\frac{\dot{a}(t)x_{1}}{a(t)}\frac{\partial}{\partial x_{1}}\frac{\dot
{a}(t)x_{1}}{a(t)}+\frac{\dot{a}(t)x_{2}}{a(t)}\frac{\partial}{\partial x_{2}%
}\frac{\dot{a}(t)x_{1}}{a(t)}+\sum_{k=1}^{N}u_{k}\frac{\partial}{\partial
x_{k}}\frac{\dot{a}(t)x_{1}}{a(t)}\right]  +0+0\\
&  =\rho\left[  \left(  \frac{\ddot{a}(t)}{a(t)}-\frac{\dot{a}^{2}(t)}%
{a^{2}(t)}\right)  x_{1}+\frac{\dot{a}(t)x_{1}}{a(t)}\frac{\dot{a}(t)}%
{a(t)}\right]  \\
&  =\rho\frac{\ddot{a}(t)}{a(t)}x_{1}\\
&  =0.
\end{align}
Therefore, our solutions satisfy the Euler-Poisson equations. In particular,
$a_{1}>0$ and $a_{2}<0$, the solutions (\ref{ss21}) blow up in the finite time
$T=-a_{2}/a_{1}$.

The proof is completed.
\end{proof}

\begin{remark}
The existence and uniqueness of the function $\Phi(s)$ in the solutions
(\ref{ss21})$_{3}$ can be shown by the fixed point theorem, if $\epsilon
^{\ast}$ is a sufficient small number.
\end{remark}

Additionally the blowup rate about the solutions is immediately followed:

\begin{corollary}
\label{thm:2}The blowup rate of the solutions (\ref{ss21}) is,%
\begin{equation}
\underset{t\rightarrow T}{\lim}\rho(t,\vec{0})\left(  T-t\right)  ^{2}\geq
O(1).
\end{equation}

\end{corollary}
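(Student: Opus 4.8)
The plan is to evaluate the density along the spatial origin and extract the singular factor explicitly; the statement then reduces to an elementary algebraic identity. First I would note that in the solution (\ref{ss21}) the argument of $\Phi$ is $s=(Ax_{1}+Bx_{2})/a(t)$, which vanishes at $\vec{x}=\vec{0}$. Hence along $\vec{x}=\vec{0}$ the function $\Phi$ is always evaluated at $s=0$, where the initial condition gives $\Phi(0)=\alpha$. Consequently
\[
\rho(t,\vec{0})=\frac{1}{a^{2}(t)}\,e^{-\tfrac{\alpha}{K}+C},
\]
a quantity depending on $t$ only through $a(t)$.

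Next I would invoke the explicit linear form $a(t)=a_{1}+a_{2}t$ from (\ref{ss21}) together with the hypotheses $a_{1}>0$, $a_{2}<0$, and the fact that $a(T)=0$ at the blowup time $T$. Writing $a(t)=a_{2}(t-T)$, so that $a^{2}(t)=a_{2}^{2}(T-t)^{2}$, we obtain
\[
\rho(t,\vec{0})\,(T-t)^{2}=\frac{e^{-\tfrac{\alpha}{K}+C}}{a_{2}^{2}(T-t)^{2}}\,(T-t)^{2}=\frac{e^{-\tfrac{\alpha}{K}+C}}{a_{2}^{2}}.
\]
Since the right-hand side is a fixed positive constant, independent of $t$, the limit as $t\rightarrow T$ exists and equals $e^{-\alpha/K+C}/a_{2}^{2}>0$, which is in particular $\geq O(1)$. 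This completes the argument.

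There is essentially no obstacle in this proof: the only point requiring care is recognizing that evaluation at $\vec{x}=\vec{0}$ kills all spatial dependence and pins $\Phi$ at its initial value $\alpha$, after which the claim is just the observation that $a^{2}(t)$ is a constant multiple of $(T-t)^{2}$. The identical reasoning proves Corollary \ref{thm:2 copy(1)} for the solutions (\ref{ss2}), where $\rho(t,\vec{0})=a(t)^{-1}e^{-\alpha/K+C}$ and $a(t)=\tfrac{1}{4a_{1}}(a_{2}t+2a_{1})^{2}$ already carries a factor proportional to $(T-t)^{2}$, so that $\rho(t,\vec{0})(T-t)^{2}$ again tends to a positive constant.
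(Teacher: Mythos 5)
Your proof is correct and is exactly the computation the paper treats as immediate: at $\vec{x}=\vec{0}$ one has $\rho(t,\vec{0})=a^{-2}(t)e^{-\alpha/K+C}$ with $a(t)=a_{1}+a_{2}t$ vanishing linearly at the blowup time, so $\rho(t,\vec{0})(T-t)^{2}$ tends to the positive constant $e^{-\alpha/K+C}/a_{2}^{2}$. Note only that the time at which $a$ vanishes is $T=-a_{1}/a_{2}$ (as your factorization $a(t)=a_{2}(t-T)$ implicitly uses), whereas the paper writes $T=-a_{2}/a_{1}$, an apparent typo that does not affect your argument.
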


\begin{remark}
The other analytical solutions in $R^{N}$ can be constructed by
\end{remark}%

\begin{equation}
\left\{
\begin{array}
[c]{c}%
\rho(t,\vec{x})=\frac{1}{a^{2}(t)}e^{-\frac{\Phi\left(  \frac{Ax_{1}+Bx_{2}%
}{a(t)}\right)  }{K}+C}\text{, }{\normalsize \vec{u}(t,\vec{x})=\dfrac
{\overset{\cdot}{a}(t)}{a(t)}(}x_{1},x_{2},\tilde{x}_{3},...,\tilde{x}%
_{N}){\normalsize ,}\\
a(t)=a_{1}+a_{2}t,\\
\ddot{\Phi}(s)-\epsilon^{\ast}e^{-\frac{\Phi(s)}{K}}=0,\text{ }\Phi
(0)=\alpha,\text{ }\dot{\Phi}(0)=\beta,
\end{array}
\right.  \label{kk0}%
\end{equation}

where $\tilde{x}_{i}=x_{1}$ or $x_{2}$.

\begin{remark}
Our solutions (\ref{ss2}), (\ref{ss21}) and (\ref{kk0}) also work for the
isothermal Navier-Stokes-Poisson equations in $R^{N}$ $(N\geq3)$:
\begin{equation}
\left\{
\begin{array}
[c]{rl}%
{\normalsize \rho}_{t}{\normalsize +\nabla\cdot(\rho\vec{u})} &
{\normalsize =}{\normalsize 0,}\\
{\normalsize (\rho\vec{u})}_{t}{\normalsize +\nabla\cdot(\rho\vec{u}%
\otimes\vec{u})+\nabla K\rho} & {\normalsize =}{\normalsize -\rho\nabla
\Phi+\mu\Delta\vec{u},}\\
{\normalsize \Delta\Phi(t,x)} & {\normalsize =\alpha(N)}{\normalsize \rho,}%
\end{array}
\right.
\end{equation}
where $\mu>0$ is a positive constant.
\end{remark}

In conclusion, due to the novel solutions obtained by the separation method,
the author conjectures there exists other analytical solution in non-radial
symmetry. Further works will be continued for seeking more particular
solutions to understand the nature of the Euler-Poisson equations (\ref{eqeq1}).

\end{document}